\newcommand{\probl}{\text{GBSM}\xspace}
\newcommand{\greedymaxcover}{\ensuremath{\mathtt{GreedyMaxCover}}\xspace}
\title{
Generalized budgeted submodular set function maximization
}
\titlerunning{Generalized budgeted submodular set function maximization}
\author{Francesco Cellinese}{Gran Sasso Science Institute, L'Aquila,  Italy.}{francesco.cellinese@gssi.it}{}{}
\author{Gianlorenzo D'Angelo}{Gran Sasso Science Institute, L'Aquila,  Italy.}{gianlorenzo.dangelo@gssi.it}{}{}
\author{Gianpiero Monaco}{University of L’Aquila, L'Aquila, Italy.}{gianpiero.monaco@univaq.it}{}{}
\author{Yllka Velaj}{University of Chieti-Pescara, Pescara, Italy.}{yllka.velaj@unich.it}{}{}
\authorrunning{F. Cellinese, G. D'Angelo, G. Monaco, and Y. Velaj} 
\subjclass{
\ccsdesc[500]{Theory of computation~Approximation algorithms analysis}, 
\ccsdesc[100]{Theory of computation~Packing and covering problems}}
\keywords{Submodular set function; Approximation algorithms; Budgeted Maximum Coverage}
\begin{document}
\maketitle

\begin{abstract}
In this paper we consider a generalization of the well-known budgeted maximum coverage problem. 
We are given a ground set of elements and a set of bins. The goal is to find a subset of elements along with an associated set of bins, such that the overall cost is at most a given budget, and the profit is maximized. 
Each bin has its own cost and the cost of each element depends on its associated bin. The profit is measured by a monotone submodular function over the elements.

We first present an algorithm that guarantees an approximation factor of $\frac{1}{2}\left(1-\frac{1}{e^\alpha}\right)$, where $\alpha \leq 1$ is the approximation factor of an algorithm for a sub-problem. We give two polynomial-time algorithms to solve this sub-problem. The first one gives us $\alpha=1- \epsilon$ if the costs satisfies a specific condition, which is fulfilled in several relevant cases, including the unitary costs case and the problem of maximizing a monotone submodular function under a knapsack constraint. The second one guarantees $\alpha=1-\frac{1}{e}-\epsilon$ for the general case. 
The gap between our approximation guarantees and the known inapproximability bounds is $\frac{1}{2}$.

We extend our algorithm to a bi-criterion approximation algorithm in which we are allowed to spend an extra budget up to a factor $\beta\geq 1$ to guarantee a $\frac{1}{2}\left(1-\frac{1}{e^{\alpha\beta}}\right)$-approximation. 
If we set $\beta=\frac{1}{\alpha}\ln \left(\frac{1}{2\epsilon}\right)$, the algorithm achieves an approximation factor of $\frac{1}{2}-\epsilon$, for any arbitrarily small $\epsilon>0$.
\end{abstract}

\section{Introduction}\label{sec:introduction}

The Maximum Coverage (MC) is a fundamental combinatorial optimization problem which has several applications in job scheduling,
facility locations and resource allocations \cite[Ch. 3]{H97}, as well as in influence maximization \cite{KKT15}. In the classical definition we are given a ground set $X$, a collection $S$ of subsets of $X$ with unit cost, and a budget $k$. The goal is selecting a subset $S' \subseteq S$, such that $|S'| \leq k$, and the number of elements of $X$ covered by $S'$ is maximized. 
A natural greedy algorithm starts with an empty solution and iteratively adds a set with maximum number of uncovered elements until $k$ sets are selected.
This algorithm has an approximation of $1-\frac{1}{e}$~\cite{NWF78} and such result is tight given the inapproximability result due to Feige \cite{F98}. An interesting special case of the problem where this inapproximability result does not hold is when the size of the sets in $S$ is small. In the maximum $h$-coverage, $h$ denotes the maximum size of each set in $S$. This problem is APX-hard for any $h \geq 3$ \cite{K91} (notice that when $h=2$ it is the maximum matching problem), while a simple polynomial local search heuristic has an approximation ratio very close to $\frac{2}{h}$ \cite{C09}. A polynomial time algorithm with approximation factor of $\frac{5}{6}$ is possible for the case when $h=3$ \cite{CM13}. In the Budgeted Maximum Coverage (BMC) problem, which is an extension of the maximum coverage, the cost of the sets in $S$ are arbitrary, and thus a solution is feasible if the overall cost of the selected subset $S' \subseteq S$ is at most $k$. In \cite{khuller1999budgeted}, the authors present a polynomial time (greedy) algorithm with approximation factor of $1-\frac{1}{e}$. In the Generalized Maximum Coverage (GMC) problem every set $s \in S$ has a cost $c(s)$, and every element $x \in X$ has a different weight and cost that depend on which set covers it. In \cite{CK08}, a polynomial time (greedy) algorithm with approximation factor of $1-\frac{1}{e} - \epsilon$, for any $\epsilon >0$, has been shown.

In all the above problems the profit of a solution is given by the sum of the weights of the covered elements. An important and studied extension is adopting a nonnegative, nondecreasing, submodular function $f$, which assigns a profit to each subset of elements. In the Submodular set Function subject to a Knapsack Constraint maximization (SFKC) problem we have a cost $c(x)$ for any element $x \in X$, and the goal is selecting a set $X' \subseteq X$ of elements that maximizes $f(X')$, where $f$ is a monotone submodular function subject to the constraint that the sum of the costs of the selected elements is at most $k$. This problem admits a polynomial time algorithm that is $\left(1-\frac{1}{e}\right)$-approximation \cite{S04}. Since the MC problem is a special case of SFKC problem, such result is tight.
A more general setting was considered in \cite{IB13}, where the authors consider the following problem called Submodular Cost Submodular Knapsack (SCSK): given a set of elements $V= \{1,2,\ldots,n\}$, two monotone non-decreasing submodular functions $g$ and $f$ ($f,g : 2^{V} \rightarrow \mathbb{R}$), and a budget $b$, the goal is finding a set of elements $X \subseteq V$ that maximizes the value $g(X)$ under the constraint that $f(X) \leq b$. They show that the problem cannot be approximated within any constant bound. Moreover, they give a $1/n$ approximation algorithm and mainly focus on bi-criterion approximation. 

In this paper we consider the Generalized Budgeted submodular set function Maximization problem (\probl) that is not captured by any of the above settings. We are given a ground set of elements $X$, a set of bins $S$, and a budget $k$. The goal is to find a subset of elements along with an associated set of bins such that the overall costs of both is at most a given budget and the profit is maximized. Each bin has its own cost, while the cost of each element depends on its associated bin. Finally, the profit is measured by a monotone submodular function over the elements. 

We emphasize that the problem considered here is not a special case of the GMC problem, since we consider any monotone submodular functions for the profits. Moreover, we now show that our cost function is not submodular and thus that the setting (SCSK) considered in \cite{IB13} does not generalize our model. Given a set of elements $X$, we denote with $c(X)$ the minimum cost of covering them, that is the best choice of the bins able to cover the elements of $X$ with the minimum cost. Consider the instance $(S,X) = (\{s_1,s_2\},\{x_1,x_2,x_3\} )$ with $c(s_1)=c(s_2)=1$ and the costs of associating elements to bins depicted in Table~\ref{fig:apx:cost}, where $M$ is a large positive value.  Let $T=\{x_1,x_2\}$ and $T \supseteq S=\{x_1\}$. We now show that $c(S \cup \{x_3\}) - c(S) < c(T \cup \{x_3\}) - c(T)$. In fact, notice that $c(S)= 2-\epsilon$, that is covering the element $x_1$ with bin $s_2$. Moreover, $c(T)= 3$, that is covering the elements $x_1$ and $x_2$ with bin $s_1$. Finally, $c(S \cup \{x_3\})=1+1-\epsilon + \epsilon = 2$ (i.e. $x_1$ and $x_3$ are associated to $s_2$) and $c(T \cup \{x_3\})=1+1+1+1-\epsilon + \epsilon$ (i.e. $x_1$ and $x_3$ are associated to $s_2$, while $x_2$ is associated to $s_1$). We conclude that $c(S \cup \{x_3\}) - c(S)= \epsilon < c(T \cup \{x_3\}) - c(T) = 1$.
 
\begin{table}
\centering
\caption{}\label{fig:apx:cost}
\begin{tabular}{c c}
\hline
 $S, X$ &   $c(S, X)$\\
\hline
$(s_1, x_1)$ & $1$\\
$(s_1, x_2)$ & $1$\\
$(s_1, x_3)$ & $M$\\
$(s_2, x_1)$ & $1-\epsilon$\\
$(s_2, x_2)$ & $M$\\
$(s_2, x_3)$ & $\epsilon$\\
\hline
\end{tabular}
\end{table}

Finally, we notice that our setting extends the SFKC problem, given that, the cost of an element is not fixed like in SFKC, but instead depends on the bin used for covering it. 

In addition to its theoretical appeal, our setting is motivated by the adaptive seeding problem, which is an algorithmic challenge motivated by influence maximization in social networks~\cite{BPRSS16,SS13}. In its non-stochastic version, the problem is to select amongst certain accessible nodes in a network, and then select amongst neighbors of those nodes, in order to maximize a global objective function. In particular, given a set $X$ and its neighbors $N(X)$ there is a monotone submodular function defined on $N(X)$, and the goal is to select $t \leq k$ elements in $X$ connected to a set of size at most $k-t$ for which the submodular function has the largest value. Our setting is an extension of it since we consider more general costs.

\subsection{Our results}

In Section~\ref{sec:general} we present an algorithm that guarantees an approximation factor of $\frac{1}{2}\left(1-\frac{1}{e^\alpha}\right)$ for \probl. Here, $\alpha$ is the approximation factor of an algorithm used to select a subset of elements whose ratio between marginal increment in the objective function and marginal cost is maximum. 
We give two polynomial-time algorithms to solve this sub-problem. 
In particular, in Section~\ref{sec:epsilonblock} we propose an algorithm that gives us $\alpha=1- \epsilon$ if the costs satisfy a specific condition.  This latter is fulfilled in several relevant cases including the unitary costs case and the problem of maximizing a monotone submodular function under a knapsack constraint. In Section~\ref{sec:slot} we propose an algorithm that guarantees $\alpha=1-\frac{1}{e}-\epsilon$ for the general case. 

The gap between our approximation guarantees and the known inapproximability bounds, i.e. the $1-\frac{1}{e}$ hardness for the MC problem~\cite{F98}  and the $1-\frac{1}{e^{1-\frac{1}{e}}}$ hardness for the non-stochastic adaptive seeding problem with knapsack constraint~\cite{RSS15}, is $\frac{1}{2}$, unless $P=NP$.
%
%
%

In Section~\ref{sec:bicriteria}, we extend our algorithm to a bi-criterion approximation algorithm in which we are allowed to spend an extra budget up to a factor $\beta$.
An algorithm gives a $[\rho, \beta]$ bi-criterion approximation for \probl if it is guaranteed to obtain a solution $(S',X')$ such that $f(X')\geq \rho f(X^*)$ and $c(S', X')\leq\beta k$, where $X^*$ is the optimal solution. 
We denote by $\beta$ the extra-budget we are allowed to use in order to obtain a better approximation factor.
Our algorithm guarantees a $\left[\frac{1}{2}\left(1-\frac{1}{e^{\alpha\beta}}\right),\beta\right]$-approximation. If we set $\beta=\frac{1}{\alpha}\ln \left(\frac{1}{2\epsilon}\right)$, the algorithm achieves an approximation factor of $\frac{1}{2}-\epsilon$, for any arbitrarily small $\epsilon>0$.

\subsection{Related work}
Maximum coverage and submodular set function maximization are important problems. In the literature, besides the above mentioned ones, there are many other papers dealing with related issues. For instance, in the maximum coverage with group budgeted constraints, the set $S$ is partitioned into groups, and the goal is to pick $k$ sets from $S$ to maximize the cardinality of their union with the restriction that at most one set can be picked from each group. In \cite{CK04}, the authors propose a $\frac{1}{2}$-approximation algorithms for this problem, and smaller constant approximation algorithm for the cost version. In the ground-set-cost budgeted maximum coverage problem, given a budget and a hypergraph, where each vertex has a non-negative cost and a non-negative profit, we want to select a set of hyperedges such that the total cost of the covered vertices is at most the budget and the total profit of all covered vertices is maximized. This problem is strictly harder than budgeted max coverage. The difference of our problem to the budgeted maximum coverage problem is that the costs are associated with the covered vertices instead of the selected hyperedges. In \cite{VKS16}, the authors obtain a $\frac{1}{2}\left(1 - \frac{1}{\sqrt e}\right)$-approximation algorithm for graphs (which means having sets of size 2) and an FPTAS if the incidence graph of the hypergraph is a forest (i.e. the hypergraph is Berge-acyclic). 

Maximizing submodular set function is another important research topic. The general version of the problem is: given a set of elements and a monotone submodular function, the goal is to find the subset of elements that gives the maximum value, subjected to some constraints. The case when the subset of elements must be an independent set of the matroid over the set of elements has been considered in \cite{CCPV11}, where the authors show an optimal randomized $\left(1- \frac{1}{e}\right)$-approximation algorithm. A simpler algorithm has been proposed in \cite{FW14}. The case of multiple $k$ matroid constraints has been considered in \cite{LSV10}, where the authors give a $\frac{1}{k+\epsilon}$-approximation. An improved result appeared in \cite{W12}. Finally, unconstrained (resp. constrained) general non-monotone submodular maximization, have been considered in \cite{BFNS12,FMV07} (resp. \cite{VCZ11}).          

Another related topic is the adaptive seeding problem in which the aim is to select amongst a set $X$ of nodes of a network, called the \emph{core}, and then adaptively selecting amongst the neighbors $N(X)$ of those nodes as they become accessible in order to maximize a submodular function of the selected nodes in $N(X)$~\cite{BPRSS16,SS13}. An approximation algorithm with ratio $\left(1-\frac{1}{e}\right)^2$ has been proposed in~\cite{BPRSS16}. In the adaptive seeding with knapsack constraints problem, nodes in $X$ and in $N(X)$ are associated with a cost and the aim is to maximize the objective function while respecting a budget constraint. In this case, an $\left(1-\frac{1}{e}\right) \left(1-\frac{1}{e^{1-\frac{1}{e}}}\right)$-approximation algorithm is known~\cite{RSS15}. In the non-stochastic version of these problems, all the nodes in $N(X)$ become accessible with probability one. Even in this case it is not possible to approximate an optimal solution within a factor greater than $\left(1-\frac{1}{e^{1-\frac{1}{e}}}\right)$, unless $P=NP$.
A similar problem in which the core is made of the whole network and the network can be augmented by adding edges according to a given cost function has been shown to admit a $0.0878$-approximation algorithm~\cite{DSV17}. 
Finally, in~\cite{DSV16, TCS18} the authors consider the problem where the core is made of a give set of nodes and the network can be augmented by adding edges incident only to the nodes in the core. In the unit-cost version of the problem where the cost of adding any edge is constant and equal to $1$ the problem is $NP$-hard to be approximated within a constant factor greater than $1-(2e)^{-1}$. Then they provide a greedy approximation algorithm that guarantees an approximation factor of $1-\frac{1}{e}-\epsilon$, where
$\epsilon$ is any positive real number. 
Then, they study the more general problem where the cost of edges is in $[0,1]$ and propose an algorithm that
achieves an approximation guarantee of $1-\frac{1}{e}$ combining greedy and enumeration technique.

\section{Preliminaries} \label{preliminaries}
We are given a set $X$ of $n$ elements and a set $S$ of $m$ bins. Let us denote the cost of a bin $s\in S$ by $c(s)\in \mathbb{R}_{\geq 0}$. For each bin $s\in S$ and element $x\in X$, we denote by $c(s,x)$ the cost of associating $x$ to $s$.
Given a budget $k\in \mathbb{R}_{\geq 0}$, and a monotone submodular function $f: 2^X \rightarrow \mathbb{R}_{\geq 0}$\footnote{For a ground set $X$, a function $f:2^X\rightarrow \mathbb{R}_{\geq 0}$ is submodular if for any pair of sets $S\subseteq T \subseteq X$ and for any element $x\in X\setminus T$, $f(S\cup\{x\}) - f(S) \geq f(T\cup \{x\}) - f(T)$.}, our goal is to find a subset $X'$ of $X$ and a subset $S'\neq\emptyset$ of $S$ such that $c(S',X') = \sum_{s\in S'} c(s) + \sum_{x\in X'} \min_{s\in S'}c(s,x) \leq k$, and $f(X')$ is maximum.
We call this problem the Generalized Budgeted submodular set function Maximization problem (\probl).

Our problem generalizes several well-known problems. Indeed, by setting $c(s,x)=\infty$, we do not allow the association of element $x$ to bin $s$, while by setting $c(s,x)=0$ we allow to assign element $x$ to bin $s$ with no additional cost. Moreover, we relax the constraints related to the association of elements to bins by setting $c(s)=0$ for each $s\in S$, and $c(s_1,x)=c(s_2,x)$, for each $s_1,s_2\in S$ and $x\in X$. By suitably combining these conditions we can capture the following problems: budgeted maximum coverage problem~\cite{khuller1999budgeted}; non-stochastic adaptive seeding problem~\cite{BPRSS16} (also with knapsack constraints~\cite{RSS15}); monotone submodular set function subject to a knapsack constraint maximization~\cite{S04}. Moreover, our cost function is not submodular and thus that the setting considered in \cite{IB13} does not generalize our model.



Let us consider a partial solution $(S',X')$. 
Given a set $T\subseteq X\setminus X'$, we denote by $c_{\min}(T)$ the minimum cost of associating the elements in $T$ with a single bin in $S$, considering that the cost of bins in $S'$ has been already paid, formally:
\[
  c_{\min}(T)=\min_{s\in S}\left\{c_{S'}(s)+\sum_{x \in T}c(s,x)\right\},
\]
where $c_{S'}(s) = c(s)$ if $s\not\in S'$, and $c_{S'}(s) = 0$ if $s\in S'$. We call $c_{\min}(T)$ the \emph{marginal cost} of $T$ with respect to the partial solution $(S',X')$.
We define $s_{\min}(T)$ as the bin $s \in S$ needed to cover $T$ with cost $c_{\min}(T)$. Moreover, we denote by $\bar{c}(T)$ the cost of associating the elements in $T$ to $s_{\min}(T)$, $\bar{c}(T)= c_{\min}(T) - c_{S'}(s_{\min}(T))$.


The \emph{marginal increment} of $T\subseteq X$ with respect to the partial solution $(S',X')$ is defined as $f(X' \cup T)-f(X')$. To simplify the notation, we use $g(T)=f(X' \cup T)-f(X')$ to denote the marginal increment. 

In the algorithm in the next section, we will look for subsets of $X$ that maximize the ratio between the marginal increment and the marginal cost with respect to some partial solution. In the following we define a family of subsets of $X$ containing a set that approximates such maximal ratio. Given a partial solution $(S',X')$, we denote by $\cal F$ the family of subsets $T$ of $X$ that can be associated to bins in $S'\cup\{s\}$, for some single bin $s\in S$, with a cost such that $c(S'\cup\{s_{\min}(T)\},T)\leq k$, formally ${\cal F} =\left\{ T\in 2^{X\setminus X'}~|~ c(S'\cup\{s_{\min}({T})\},{T})\leq k  \right\} $.
A sub-family of $\cal F$ is an $\alpha$-\emph{list} with respect to $(S',X')$ if it contains a subset $T$ whose ratio between marginal increment and marginal cost is at least $\alpha$ times the optimal such ratio amongst all the subsets $\cal F$.
Formally, $L\subseteq {\cal F}$ is an $\alpha$-list with respect to $(S',X')$ if
%
\[
\max\left\{ \frac{g(T)}{c_{\min}(T)}~|~T\in L,c_{\min}(T)>0  \right\}  \geq  \alpha \cdot  \max\left\{ \frac{g(T)}{c_{\min}(T)}~|~T\in {\cal F},c_{\min}(T)>0  \right\}.
\]
Note that the sets that maximize the above formula are not necessarily singletons due to the bin opening cost. Moreover, the algorithm given in the next section build partial solutions $(S',X')$ in such a way that $c_{\min}(T)>0$, for each $T \in {\cal F}$.

\section{Greedy Algorithm}\label{sec:general}
In this section we give an algorithm that guarantees a $\frac{1}{2}\left(1-\frac{1}{e^\alpha}\right)$-approximation to the \probl problem,
 if we assume that we can compute, in polynomial time, an $\alpha$-list of polynomial size.  
 In the next sections we will give two algorithms to compute such lists for bounded values of $\alpha$.

The pseudo-code is reported in Algorithm~\ref{generalalgo}. In the first step (line \ref{generalalgo:zerocostbin}) we add all zero-cost bins to the solution.
Then, the algorithm finds two candidate solutions. The first one is found at lines~\ref{generalalgo:greedystart}--\ref{generalalgo:greedyend} with a greedy strategy as follows.
The algorithm iteratively constructs a partial solution $(S',X')$ by adding a subset $\hat{T}$ to $X'$ and a bin $s_{\min}(\hat{T})$ to $S'$. In particular, at each iteration, 
it first adds all the elements that can be associated to $S'$ with cost 0 (line~\ref{generalalgo:zerocost}). Then, it selects a subset $\hat{T}$ that maximizes the ratio between the marginal increment and the marginal cost amongst the elements of an $\alpha$-list $L$. Here, we assume that we have an algorithm to compute an $\alpha$-list $L$ w.r.t. $(S',X')$ (see line~\ref{generalalgo:alpha}). In the next sections, we will show how to compute $L$ in polynomial time for some bounded $\alpha$. The algorithm stops when adding the element with the maximum ratio would exceed the budget $k$ or when $X'=X$. Without loss of generality, we can assume that at each iteration, the sets in the $\alpha$-list $L$ do not contain any element in $X'$, since such elements do not increase the value of the marginal increment and possibly increase the marginal cost. This implies that at each iteration of the greedy procedure at least a new element in $X$ is added to $X'$ and then the number of iterations is $O(n)$.

Let $(S_G,X_G)$ be the first candidate solution computed at the end of the greedy procedure.
The second candidate solution (lines~\ref{generalalgo:max}--\ref{generalalgo:secondsol}) is computed by using the set $\hat{T}$ that is discarded in the last iteration of the greedy procedure because adding $\{s_{\min}(\hat{T})\}$ and $\hat{T}$ to $(S_G,X_G)$ would exceed the budget. Indeed, the second candidate solution is $(S_G\cup\{s_{\min}(\hat{T})\},\hat{T})$. Note that this solution is feasible because $\hat{T}$ is contained in the $\alpha$-list $L$ computed in the last iteration of the greedy algorithm. Therefore, by definition of $\alpha$-list, 
$c(S_G\cup\{s_{\min}(\hat{T})\},\hat{T})\leq k$.


The algorithm returns one of the two candidate solutions that maximizes the objective function.

The computational complexity of Algorithm~\ref{generalalgo} is $O(n\cdot (|L_{\max}| + cl))$, where $L_{\max}$ is the largest $\alpha$-list computed and $cl$ is the computational complexity of the algorithm at line~\ref{generalalgo:alpha}. In the next sections we will show that our algorithms construct the $\alpha$-lists in such a way that both $|L_{\max}|$ and $cl$ are polynomially bounded in the input size.

\begin{algorithm}[t]
    \SetKwInOut{Input}{Input}
    \SetKwInOut{Output}{Output}
    \Input{$S, X$ }
    \Output{$S',X'$}
    $S' := \emptyset$\;
    $X' := \emptyset$\;
    \lForEach{\rm $s\in S$ s.t. $c(s)=0$}{$S' := S'\cup \{s\}$\label{generalalgo:zerocostbin}}
    \Repeat{$c(S' \cup \{s_{\min}(\hat{T})\}, X'\cup \hat{T}) > k$ \rm or $X'= X$}{\label{generalalgo:greedystart}
      \lForEach{\rm $x\in X\setminus X'$ s.t. $c(s',x)=0$ and $s'\in S'$\label{generalalgo:outcondition}}{$X':=X'\cup \{x\}$\label{generalalgo:zerocost}}
      Build an $\alpha$-list $L$ w.r.t. $(S',X')$\;\label{generalalgo:alpha}
      $\hat{T}:=\arg\max_{T\in L}\frac{f(X' \cup T)-f(X')}{c_{\min}(T)}$\;\label{generalalgo:greedymax}
      \If{$c(S' \cup \{s_{\min}(\hat{T})\}, X'\cup \hat{T}) \leq k$ \label{generalalgo:budget}}{      
        $S':=S' \cup \{s_{\min}(\hat{T})\}$\;
        $X':=X' \cup \hat{T}$\;
      }
    }
    \label{generalalgo:greedyend}
    \If{$f(\hat{T})\geq f(X')$}{\label{generalalgo:max}
      $S':=S'\cup\{s_{\min}(\hat{T})\}$\;
      $X':=\hat{T}$\;\label{generalalgo:secondsol}
    }
    \Return $(S',X')$\;
    \caption{General Algorithm}
    \label{generalalgo}
\end{algorithm}

In what follows we analyze the approximation ratio of Algorithm~\ref{generalalgo}. The proof generalizes known arguments for monotone submodular maximization, see e.g.~\cite{CK08,khuller1999budgeted,S04}.

We give some additional definitions that will be used in the proof. We denote an optimal solution by $(S^*, X^*)$. Let us consider the iterations executed by the greedy algorithm. Let $l+1$ be the index of the iteration in which an element in the $\alpha$-list is not added to $X'$ because it violates the budget constraint\footnote{We can assume that this iteration exists, as otherwise the algorithm is able to select $X'=X$, which is the optimum.}.
For $i=1,2,\ldots,l$, we define $X_i'$ and $S_i'$ as the sets $X'$ and $S'$ at the end of the $i$-th iteration of the algorithm, respectively. Moreover, let $X_{l+1}' = X_l' \cup \{\hat{T}\}$ and $S_{l+1}' =  S_l' \cup \{s_{\min}(\hat{T})\}$, where $\hat{T}$ is the element selected at line~\ref{generalalgo:greedymax} of iteration $l+1$ (see Algorithm~\ref{generalalgo}). Let $c_i$ be the value of $c_{\min}(\hat{T})$ as computed at iteration $i$ of the greedy algorithm.
The next lemma will be used in the proof of Theorem~\ref{th:greedy}.
\begin{lemma}\label{lem:induction_min}
After each iteration $i=1,2,\dots,l+1$, 
\begin{equation*}
f(X_i')\geq \left( 1-\prod_{j=1}^i\left(1-\alpha\frac{c_j}{k}\right) \right)f(X^*).
\end{equation*}
\label{lemma2algogenerale}
\end{lemma}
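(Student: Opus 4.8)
The plan is to argue by induction on $i$, after first isolating the single per-iteration inequality
\[
f(X_i') - f(X_{i-1}') \;\geq\; \alpha\,\frac{c_i}{k}\,\bigl(f(X^*) - f(X_{i-1}')\bigr),
\]
where $X_{i-1}'$ denotes the partial solution at the start of iteration $i$ and $X_0'$ is the set obtained after the zero-cost insertions of lines~\ref{generalalgo:zerocostbin} and~\ref{generalalgo:zerocost}. Once this is established, rearranging gives $f(X^*) - f(X_i') \leq (1 - \alpha c_i/k)(f(X^*) - f(X_{i-1}'))$, and unrolling the recurrence together with $f(X_0') \geq 0$ (monotonicity of $f$) yields the claimed bound. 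The base case is immediate, and the inductive step is exactly the one-step inequality, so the entire content of the lemma lies in proving that inequality.

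To prove the per-iteration inequality I would relate the greedily chosen set $\hat T$ to the optimum via the $\alpha$-list guarantee. Write $g(T) = f(X_{i-1}' \cup T) - f(X_{i-1}')$ for the marginal increment with respect to $(S_{i-1}', X_{i-1}')$. Since $\hat T$ achieves the maximum ratio over the $\alpha$-list $L$, the definition of $\alpha$-list gives
\[
\frac{g(\hat T)}{c_i} \;\geq\; \alpha \cdot \max_{T \in \mathcal F,\, c_{\min}(T) > 0} \frac{g(T)}{c_{\min}(T)}.
\]
The crux is to lower-bound the right-hand side using the optimal solution. Here the difficulty is that $X^*$ is covered by a \emph{set} of bins $S^*$, whereas every member of $\mathcal F$ is coverable by a single bin, so $X^*$ itself is generally not in $\mathcal F$. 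I would therefore partition the uncovered part $X^* \setminus X_{i-1}'$ into groups $T_1, \dots, T_p$, one per optimal bin $s^*_j \in S^*$, assigning each element to a cheapest optimal bin covering it; each $T_j$ is then coverable by the single bin $s^*_j$ and satisfies $c(S_{i-1}' \cup \{s_{\min}(T_j)\}, T_j) \leq k$, so $T_j \in \mathcal F$.

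With this partition in hand, two standard estimates close the argument. First, an averaging step bounds the maximum ratio from below by the ratio of the sums, $\max_j g(T_j)/c_{\min}(T_j) \geq \sum_j g(T_j) / \sum_j c_{\min}(T_j)$. Second, because $g$ is submodular with $g(\emptyset) = 0$ it is subadditive, so $\sum_j g(T_j) \geq g\bigl(\bigcup_j T_j\bigr) = f(X_{i-1}' \cup X^*) - f(X_{i-1}') \geq f(X^*) - f(X_{i-1}')$ by monotonicity; and bounding $c_{\min}(T_j) \leq c_{S_{i-1}'}(s^*_j) + \sum_{x \in T_j} c(s^*_j, x)$ and summing gives $\sum_j c_{\min}(T_j) \leq c(S^*, X^*) \leq k$ (bins already paid only lower the marginal cost). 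Combining these with the $\alpha$-list inequality yields $g(\hat T)/c_i \geq \alpha\bigl(f(X^*) - f(X_{i-1}')\bigr)/k$, which is precisely the per-iteration inequality. The same reasoning applies verbatim at iteration $l+1$, since the discarded set $\hat T$ still belongs to the $\alpha$-list and hence to $\mathcal F$. I expect the main obstacle to be exactly the multiple-bin issue: setting up the partition, verifying that each group lies in $\mathcal F$, and carrying out the marginal-cost accounting so that the telescoped cost stays within the budget $k$.
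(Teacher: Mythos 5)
Your proposal is correct and takes essentially the same route as the paper: the paper likewise reduces the lemma to the per-iteration inequality $f(X_i')-f(X_{i-1}')\geq \alpha\frac{c_i}{k}(f(X^*)-f(X_{i-1}'))$ (its Lemma~\ref{lemma1algogenerale}), proved by partitioning $X^*\setminus X_{i-1}'$ according to the optimal bins, applying the $\alpha$-list guarantee to each block, and bounding the summed marginal costs by $k$ via the optimum's budget, and then runs the identical induction. Your mediant/averaging step is just an algebraic rearrangement of the paper's weighted summation of the per-block inequalities, and your subadditivity bound $\sum_j g(T_j)\geq f(X^*)-f(X_{i-1}')$ is the paper's telescoping-plus-submodularity estimate in different clothing.
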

The next  lemma will be used in the proof of the Lemma~\ref{lem:induction_min}.
\begin{lemma}\label{lemma1algogenerale}
After each iteration $i = 1, 2, \dots , l+1$, the following holds 
\[
f(X'_i)-f(X'_{i-1})\geq \frac{c_i}{k}\alpha (f(X^*)-f(X'_{i-1})). 
\]
\end{lemma}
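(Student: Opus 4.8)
The plan is to read the claimed inequality as the standard ``greedy ratio beats residual optimum over budget'' estimate, adapted to the bin cost structure. Writing $g(T)=f(X'_{i-1}\cup T)-f(X'_{i-1})$ for the marginal increment with respect to the partial solution at the start of iteration $i$, the left-hand side is $g(\hat T)$ and $c_i=c_{\min}(\hat T)$. Since $\hat T$ is the ratio maximizer over an $\alpha$-list $L$ with respect to $(S'_{i-1},X'_{i-1})$, the defining inequality of an $\alpha$-list gives, for every $T\in{\cal F}$ with $c_{\min}(T)>0$,
\[
\frac{g(\hat T)}{c_i}\;\ge\;\alpha\,\frac{g(T)}{c_{\min}(T)}.
\]
So it suffices to produce sets in ${\cal F}$ whose marginal increments add up to at least the residual optimum $f(X^*)-f(X'_{i-1})$ while their marginal costs add up to at most $k$.

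I would obtain such sets by decomposing the uncovered part of the optimum. Let $O=X^*\setminus X'_{i-1}$ and, for each bin $s\in S^*$, let $O_s$ be the elements of $O$ whose cheapest covering bin within $S^*$ is $s$; the sets $\{O_s\}_{s\in S^*}$ are disjoint and partition $O$. Each $O_s$ is coverable by the single bin $s$ and lies in ${\cal F}$, so the displayed inequality yields $g(O_s)\le\frac1\alpha\frac{g(\hat T)}{c_i}\,c_{\min}(O_s)$. Because $g$ is monotone, submodular and satisfies $g(\emptyset)=0$, it is subadditive over the disjoint family $\{O_s\}$, hence $g(O)\le\sum_{s\in S^*}g(O_s)$. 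On the other hand, monotonicity of $f$ gives $f(X^*)-f(X'_{i-1})\le f(X^*\cup X'_{i-1})-f(X'_{i-1})=g(O)$. Combining,
\[
f(X^*)-f(X'_{i-1})\;\le\;\frac1\alpha\,\frac{g(\hat T)}{c_i}\sum_{s\in S^*}c_{\min}(O_s).
\]

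The remaining ingredient is the cost bound $\sum_{s\in S^*}c_{\min}(O_s)\le k$. Here I would estimate each $c_{\min}(O_s)$ by evaluating at the optimal bin $s$ rather than at the true minimizer $s_{\min}(O_s)$, namely $c_{\min}(O_s)\le c_{S'_{i-1}}(s)+\sum_{x\in O_s}c(s,x)$. Summing over $s\in S^*$, the opening terms give $\sum_{s\in S^*}c_{S'_{i-1}}(s)\le\sum_{s\in S^*}c(s)$ since bins already in $S'_{i-1}$ are charged $0$, and the association terms recombine into $\sum_{x\in O}\min_{s'\in S^*}c(s',x)\le\sum_{x\in X^*}\min_{s'\in S^*}c(s',x)$ because each $x\in O_s$ is assigned to its cheapest optimal bin. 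Adding these gives exactly $c(S^*,X^*)\le k$. Substituting this bound and rearranging the previous display produces $g(\hat T)\ge\frac{c_i}{k}\alpha\,(f(X^*)-f(X'_{i-1}))$, which is the claim.

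I expect the main obstacle to be precisely this cost accounting, because the marginal cost $c_{\min}$ blends a per-bin opening charge (discounted to $0$ for bins already in $S'_{i-1}$) with element-association charges, and one must check both that overcharging each group $O_s$ to its optimal bin and summing reproduces the feasible optimum cost without double counting, and that each $O_s$ genuinely belongs to ${\cal F}$ so that the $\alpha$-list inequality applies. A minor point to dispose of is the zero-cost elements inserted at line~\ref{generalalgo:zerocost}: they leave $S'_{i-1}$ and hence every marginal cost unchanged, and by monotonicity only raise $f(X'_i)$, so they can be folded in without weakening the estimate. The subadditivity step and the final algebra are routine.
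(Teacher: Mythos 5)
Your proof is correct and takes essentially the same route as the paper's: you partition $X^*\setminus X'_{i-1}$ into blocks by their optimal bin, apply the $\alpha$-list ratio inequality to each block, use submodularity (via Lemma~\ref{lem:difference}-style subadditivity and monotonicity) to dominate the residual optimum $f(X^*)-f(X'_{i-1})$, and bound the summed marginal costs by $c(S^*,X^*)\leq k$. Your explicit cost accounting, charging each block $O_s$ to the bin $s$ of the optimal solution rather than to $s_{\min}(O_s)$, is precisely what the paper packages into the quantity $c^*_i(T_j)$, so the two arguments coincide step for step (with your treatment of the zero-cost elements and of membership in ${\cal F}$ being, if anything, slightly more explicit than the paper's).
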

\begin{proof}
Let us consider a partition of the elements in $X^*\setminus X_{i-1}'$  according to the bins they are assigned to in the optimal solution
$(S^*,X^*)$, that is the elements of each set $T_j$ in the partition are associated with the same bin in $(S^*,X^*)$ Formally, $X^*\setminus X_{i-1}' = T_1\cup T_2 \cup \ldots \cup T_\ell$ such that for each $j=1,2,\ldots,\ell$ and for each $x_1,x_2\in T_j$, $\arg\min_{s\in S^*}\{c(s,x_1)\} = \arg\min_{s\in S^*}\{c(s,x_2)\}$ and $T_j$ is maximal.

We first show the following:
\[
f(X^*)-f(X_{i-1}') \leq \sum_{j=1}^{\ell}(f(X_{i-1}' \cup T_j)-f(X_{i-1}')).
\]
Indeed, 
\begin{align*}
f(X^*)-f(X_{i-1}') \leq &f(X^*\cup X_{i-1}')-f(X_{i-1}') \\
=&\sum_{j=1}^\ell \left( f(X_{i-1}' \cup T_1\cup\ldots \cup T_j )- f(X_{i-1}'  \cup T_1\cup\ldots \cup T_{j-1} ) \right)\\
\leq&\sum_{j=1}^\ell(f(X_{i-1}' \cup T_j)-f(X_{i-1}')),
\end{align*}
where the last inequality follows by submodularity of $f$.

Let us denote by $c^*_i(T_j)$ the marginal cost of adding the elements $T_j$ to solution $(S'_{i-1},X'_{i-1})$, that is $c^*_i(T_j) = c(S_{i-1}'\cup\{s^*(T_j)\},X_{i-1}' \cup T_j)-c(S_{i-1}',X_{i-1}')$, where $s^*(T_j)$ is the bin in $S^*$ which all the elements of $T_j$ are associated with.
By definition of $\alpha$-list and the maximum at line~\ref{generalalgo:greedymax} it follows that, for each $j=1,2,\ldots,\ell$,
\[
\frac{f(X_{i}')-f(X_{i-1}')}{c_i} \geq \alpha \frac{f(X_{i-1}' \cup T_j)-f(X_{i-1}')}{c^*_i(T_j) }.
\]
Therefore,
\begin{align*}
\sum_{j=1}^\ell(f(X_{i-1}' \cup T_j)-f(X_{i-1}')) &\leq  \sum_{j=1}^\ell\frac{c^*_i(T_j)}{\alpha}\frac{f(X_{i}')-f(X_{i-1}')}{c_i} \\
&= \frac{f(X_{i}')-f(X_{i-1}')}{\alpha c_i} \sum_{j=1}^\ell c^*_i(T_j) \\
&\leq \frac{f(X_{i}')-f(X_{i-1}')}{\alpha c_i} k.
\end{align*}
It follows that:
\[
f(X^*)-f(X_{i-1}') \leq \frac{k}{c_i\alpha } \Bigl(f(X_i')-f(X_{i-1}')\Bigr),
\]
which implies the statement.
\end{proof}

\noindent {\bf Proof of Lemma~\ref{lem:induction_min}.}
For $i=1$ from Lemma~\ref{lemma1algogenerale} follows that $f(X'_1)\geq \alpha\frac{c_1}{k} f(X^*)$.
Applying Lemma \ref{lemma1algogenerale} and the inductive hypothesis we obtain:
\begin{align*}
f(X_i')&=f(X_{i-1}')+(f(X_i')-f(X_{i-1}')) \\
&\geq f(X_{i-1}')+\alpha\frac{c_i}{k}(f(X^*)-f(X_{i-1}'))\\
&= f(X_{i-1}')\left(1-\alpha\frac{c_i}{k}\right)+\alpha\frac{c_i}{k}f(X^*)\\
&\geq \left(1-\prod_{j=1}^{i-1}\left(1-\alpha\frac{c_j}{k}\right)\right)f(X^*)\left(1-\alpha\frac{c_i}{k}\right)+\alpha\frac{c_i}{k}f(X^*)\\
&= \left(1-\prod_{j=1}^{i}\left(1-\alpha\frac{c_j}{k}\right)\right)f(X^*).
\end{align*}\qed

Armed with Lemma~\ref{lem:induction_min}, we can prove Theorem~\ref{th:greedy}.
\begin{theorem}\label{th:greedy}
Algorithm \ref{generalalgo} guarantees an approximation factor of $\frac{1}{2}\left(1-\frac{1}{e^{\alpha}}\right)$ for \probl.
\end{theorem}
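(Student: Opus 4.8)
The plan is to combine the final-iteration bound from Lemma~\ref{lem:induction_min} with the two-candidate structure of Algorithm~\ref{generalalgo}. First I would instantiate Lemma~\ref{lem:induction_min} at $i=l+1$ to obtain
\[
f(X'_{l+1})\geq\left(1-\prod_{j=1}^{l+1}\left(1-\alpha\frac{c_j}{k}\right)\right)f(X^*),
\]
and then bound the product from above. The key observation is that the marginal costs telescope: each iteration $j$ adds a bin $s_{\min}(\hat{T})$ and a set $\hat{T}$ at marginal cost $c_j=c_{\min}(\hat{T})$, while the zero-cost insertions at line~\ref{generalalgo:zerocost} contribute nothing, so $c(S'_{l+1},X'_{l+1})=\sum_{j=1}^{l+1}c_j$. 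Since iteration $l+1$ is by definition the one whose addition triggers the stopping condition $c(S'\cup\{s_{\min}(\hat{T})\},X'\cup\hat{T})>k$, we get $\sum_{j=1}^{l+1}c_j>k$.

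Next I would apply the elementary inequality $1-x\leq e^{-x}$ termwise, giving
\[
\prod_{j=1}^{l+1}\left(1-\alpha\frac{c_j}{k}\right)\leq e^{-\frac{\alpha}{k}\sum_{j=1}^{l+1}c_j}\leq e^{-\alpha},
\]
where the last step uses $\sum_{j=1}^{l+1}c_j\geq k$ together with $\alpha>0$. Substituting back into the bound from Lemma~\ref{lem:induction_min} yields $f(X'_{l+1})\geq\left(1-\frac{1}{e^{\alpha}}\right)f(X^*)$.

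The final step relates $f(X'_{l+1})$ to the two candidate solutions the algorithm compares. Recall $X'_{l+1}=X_G\cup\hat{T}$, and by the convention (noted before the greedy loop) that the $\alpha$-lists contain no element already in $X'$, the sets $X_G$ and $\hat{T}$ are disjoint. Submodularity together with $f(\emptyset)\geq 0$ gives subadditivity on disjoint sets, so $f(X'_{l+1})=f(X_G\cup\hat{T})\leq f(X_G)+f(\hat{T})$. Since Algorithm~\ref{generalalgo} returns whichever of the greedy solution (objective $f(X_G)$) and the discarded-set solution (objective $f(\hat{T})$, selected at lines~\ref{generalalgo:max}--\ref{generalalgo:secondsol}) has larger value, its value is at least
\[
\tfrac{1}{2}\bigl(f(X_G)+f(\hat{T})\bigr)\geq\tfrac{1}{2}f(X'_{l+1})\geq\frac{1}{2}\left(1-\frac{1}{e^{\alpha}}\right)f(X^*),
\]
which is exactly the claimed bound.

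I expect the main obstacle to be the careful justification that the greedy marginal costs telescope to the true total cost $\sum_{j=1}^{l+1}c_j$ and that the budget-violation stopping condition translates precisely into $\sum_{j=1}^{l+1}c_j>k$; the cost accounting must correctly absorb the free bin openings and the zero-cost element insertions so that nothing is double counted or dropped. The remaining steps (the $1-x\leq e^{-x}$ estimate and the subadditivity argument) are routine once the disjointness of $X_G$ and $\hat{T}$ and the nonnegativity of $f$ are invoked.
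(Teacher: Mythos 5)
Your proof is correct and follows essentially the same route as the paper's: instantiate Lemma~\ref{lem:induction_min} at $i=l+1$, use the telescoped cost $\sum_{j=1}^{l+1}c_j = c(S'_{l+1},X'_{l+1})>k$ from the stopping condition to bound the product by $e^{-\alpha}$, and then split $f(X'_{l+1})\leq f(X'_l)+f(\hat{T})$ by submodularity so that the better of the two candidate solutions earns the factor $\frac{1}{2}$. The only difference is cosmetic: you bound the product termwise via $1-x\leq e^{-x}$, whereas the paper uses the symmetrization argument that the product is extremal when all $c_j$ are equal, giving $\left(1-\frac{\alpha}{l+1}\right)^{l+1}\leq e^{-\alpha}$; both are routine and yield the same bound.
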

\begin{proof}
We observe that since $(S_{l+1}',X_{l+1}')$ violates the budget, then $c(S_{l+1}',X_{l+1}')> k$. Moreover, for a sequence of numbers $a_1,a_2,\ldots,a_n$ such that $\sum_{\ell=1}^n a_\ell = A$, the function $\left[ 1 - \prod_{i=1}^n \left(1 -\frac{a_i\cdot \alpha}{A}\right)\right]$ achieves its minimum when $a_i=\frac{A}{n}$ and that 
$
\left[ 1 - \prod_{i=1}^n \left(1 -\frac{a_i\cdot \alpha}{A}\right)\right]\geq 1-\left(1-\frac{\alpha}{n}\right)^n\geq 1-e^{-\alpha}.
$
Therefore, by applying Lemma~\ref{lem:induction_min} for $i=l+1$ and observing that $\sum_{\ell=1}^{l+1}c_\ell=c(S_{l+1}',X_{l+1}')$, we obtain:
\begin{align}\label{th:greedy_budget_line1}
 f(X_{l+1}') & \geq \left[ 1- \prod_{\ell=1}^{l+1}\left( 1 - \frac{c_\ell \cdot \alpha}{k}\right) \right]f(X^*)\\ \label{th:greedy_budget_line2}
                         & > \left[ 1- \prod_{\ell=1}^{l+1}\left( 1 - \frac{c_\ell \cdot \alpha}{c(S'_{l+1},X'_{l+1})}\right) \right]f(X^*)\\  \label{th:greedy_budget_line3}
                         & \geq \left[ 1- \left( 1 - \frac{\alpha}{(l+1)}\right)^{l+1} \right]f(X^*)
                          \geq \left(1-\frac{1}{e^{\alpha}}\right) f(X^*).
\end{align}
Since, by submodularity, $f(X_{l+1}')  \leq f(X_{l}') + f(\hat{T})$, where $\hat{T}$ is the set selected at iteration $l+1$, we get
\[
f(X_{l}') + f(\hat{T})\geq \left(1-\frac{1}{e^{\alpha}}\right) f(X^*).
\]
Hence, $\max\{ f(X'_l), f(\hat{T}) \}\geq \frac{1}{2}\left(1-\frac{1}{e^{\alpha}}\right) f(X^*)$. The theorem follows by observing that $\hat{T}$ is the set selected as the second candidate solution at lines~\ref{generalalgo:max}--\ref{generalalgo:secondsol} of Algorithm~\ref{generalalgo}.
\end{proof}

\section{Computing an $\alpha$-list for a particular case}\label{sec:epsilonblock}
In this section, we give a polynomial time algorithm to find a $(1-\epsilon)$-list with respect to a partial solution $(S',X')$ for the particular case in which, for a given parameter $\epsilon\in(0,1)$, the following condition holds:
\begin{equation}\label{eq:epsilonblock}
 \sum_{x\in T}c(s,x)\geq \frac{1}{\epsilon}c(s),
\end{equation}
for each $s \in S$ and for each $T\subseteq X$ such that $|T|=\frac{1}{\epsilon}$. We observe that this condition is fulfilled for any $\epsilon\in (0,1)$ in the case in which $c(s)=1$ and $c(s,x)\geq 1$, for each $s\in S$ and for each $x\in X$, which generalizes the non-stochastic adaptive seeding problem~\cite{BPRSS16}. Indeed, in this case $\sum_{x\in T}c(s,x)\geq |T|=\frac{1}{\epsilon}c(s)$, for each $s \in S$ and for each $T\subseteq X$, such that $|T|=\frac{1}{\epsilon}$.

We give a simple algorithm that returns a $(1-\epsilon)$-list with respect to a partial solution $(S',X')$. The algorithm works as follows: build a list which contains all the subsets $T$ of $X\setminus X'$ such that $|T|\leq \frac{1}{\epsilon}$ and $c(S'\cup\{s_{\min}(\hat{T})\},\hat{T})\leq k$.

Plugging this algorithm into line~\ref{generalalgo:alpha} of Algorithm~\ref{generalalgo}, we can guarantee an approximation factor of $\frac{1}{2}\left(1-\frac{1}{e}\right)-\epsilon'$, where $\epsilon'=\frac{1}{2e}\left(e^\epsilon-1\right)$ for \probl. 

We observe that the case in this section contains the problem of maximizing a submodular set function under a knapsack constraint as a special case. Indeed, it is enough to set $c(s)=0$, for each $s\in S$, and $c(s_1,x)=c(s_2,x)$, for each $s_1,s_2\in S$ and $x\in X$. Note that in this case Condition~\ref{eq:epsilonblock} is satisfied for any $\epsilon\in (0,1)$. A special case of submodular set function maximization is the maximum coverage problem, and since this latter is $NP$-hard to be approximated within a factor greater than $\left(1-\frac{1}{e}\right)$~\cite{F98}, then the gap between the
approximation factor of our algorithm and the best achievable one in polynomial time is $\frac{1}{2}$, unless $P=NP$.

It is easy to see that the computational complexity required by the algorithm in this section is $O(n^\frac{1}{\epsilon})$ and that $|L_{\max}|=O(n^\frac{1}{\epsilon})$.

In what follows, we assume that any set $T^*$ that maximizes the ratio between marginal increment and marginal cost has size greater than $\frac{1}{\epsilon}$, as otherwise the $\alpha$-list returned by our algorithm would contain such set.
The following two technical lemmata will be used in the analysis of the algorithm. 
\begin{lemma}\label{lem:difference}
Given a monotone submodular set function $f:2^X\rightarrow \mathbb{R}_{\geq 0}$, then, for any $X'\subseteq X$, the function $g(T)=f(X'\cup T)-f(X')$ is monotone and submodular.
\end{lemma}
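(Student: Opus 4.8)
The plan is to show that $g(T) = f(X' \cup T) - f(X')$ inherits both monotonicity and submodularity directly from $f$, by unfolding the definitions and invoking the corresponding property of $f$ on appropriately shifted arguments. Since $f$ is fixed and $X'$ is a fixed subset, the map $T \mapsto X' \cup T$ simply translates every argument by a constant set, and the subtraction of the constant $f(X')$ does not affect differences of the form that appear in the submodularity inequality. So I expect this to be a short, routine verification with no genuine obstacle.

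First I would establish monotonicity. For $T_1 \subseteq T_2 \subseteq X$, I have $X' \cup T_1 \subseteq X' \cup T_2$, so by monotonicity of $f$ it follows that $f(X' \cup T_1) \leq f(X' \cup T_2)$. Subtracting the common constant $f(X')$ from both sides yields $g(T_1) \leq g(T_2)$, which is exactly monotonicity of $g$.

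Next I would verify submodularity using the marginal-increment form stated in the paper's footnote. Take sets $A \subseteq B \subseteq X$ and an element $x \in X \setminus B$; I must show $g(A \cup \{x\}) - g(A) \geq g(B \cup \{x\}) - g(B)$. Expanding each term, the constants $f(X')$ cancel in both differences, so the left-hand side equals $f(X' \cup A \cup \{x\}) - f(X' \cup A)$ and the right-hand side equals $f(X' \cup B \cup \{x\}) - f(X' \cup B)$. Now I would set $A' = X' \cup A$ and $B' = X' \cup B$, observing that $A' \subseteq B'$ and that $x \in X \setminus B'$ whenever $x \notin B$ (if instead $x \in X'$ then both marginal increments are zero and the inequality holds trivially). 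Applying the submodularity of $f$ to the pair $A' \subseteq B'$ and the element $x$ gives precisely the desired inequality.

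The only point requiring a moment of care is the edge case $x \in X'$, where $x$ need not lie in $X \setminus B'$ and the direct application of $f$'s submodularity is not available; but in that case $X' \cup A \cup \{x\} = X' \cup A$ and $X' \cup B \cup \{x\} = X' \cup B$, so both marginals vanish and the inequality is satisfied. Aside from this trivial case, there is no real difficulty, so I would not expect any step here to be a substantive obstacle.
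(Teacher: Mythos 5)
Your proof is correct and follows essentially the same route as the paper's: expand $g$ via its definition, cancel the $f(X')$ constants, and apply the submodularity of $f$ to the shifted sets $X'\cup A \subseteq X'\cup B$. The only difference is that you spell out monotonicity (which the paper dismisses as easy) and explicitly handle the edge case $x\in X'$, which the paper sidesteps by quantifying only over sets $T\subseteq S\subseteq X\setminus X'$ and elements $x\in X\setminus(X'\cup S)$; your version is, if anything, slightly more complete.
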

\begin{proof}
It is easy to prove that $g$ is monotone.
We show  that for each pair of sets $T , S$ such that $T\subseteq  S\subseteq (X\setminus X')$ and for each $x\in X \setminus (X'\cup S)$, the increment in the value of $g$ that element $x$ causes in $S \cup \{x\}$ is smaller than the increment it produces in $T \cup \{x\}$, that is
\[
g(T \cup \{x\})-g(T) \geq g( S \cup \{x\})-g(S).
\]
By applying the definition we have:
\[
f(X'\cup T\cup \{x\})-f(X') -f(X'\cup T)+f(X') \geq f(X'\cup S \cup \{x\})-f(X') - f(X'\cup S)+f(X'),
\]
which is equivalent to:
\[
f(X'\cup T\cup \{x\})-f(X'\cup T)\geq f(X'\cup S \cup \{x\})- f(X'\cup S).
\]
The statement follows because $f$ is submodular.
\end{proof}

\begin{lemma}\label{lem:ratio}
Let us consider a monotone submodular set function $f:2^X\rightarrow \mathbb{R}_{\geq 0}$ and a cost function $c:2^X\rightarrow \mathbb{R}_{\geq 0}$ such that $c(T)=\sum_{x\in T}c(\{x\})$, for each $T\subseteq X$.
For each set $T\subseteq X$, if $T_y$ denotes the subset of $T$ such that $\frac{f(T_y)}{{c}(T_y)}$ is maximum and $|T_y|=y$, then $\frac{f(T)}{{c}(T)}\leq \frac{f(T_y)}{{c}(T_y)}$, for any $y\leq|T|$.
\end{lemma}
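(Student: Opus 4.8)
The plan is to prove the slightly stronger existence statement that \emph{some} subset of $T$ of size exactly $y$ already has ratio at least $\frac{f(T)}{c(T)}$; since $T_y$ is defined as the size-$y$ subset of maximum ratio, the claimed inequality follows at once by maximality. To produce such a subset I would average over the family of all $\binom{t}{y}$ subsets of $T$ of size $y$, where $t=|T|$. On the cost side the additivity hypothesis $c(T)=\sum_{x\in T}c(\{x\})$ makes the average exact: each element of $T$ lies in exactly $\binom{t-1}{y-1}$ of these subsets, so $\sum_{S\subseteq T,\,|S|=y} c(S)=\binom{t-1}{y-1}\,c(T)=\frac{y}{t}\binom{t}{y}\,c(T)$.

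The crux is the matching bound on the value side, the submodular averaging inequality
\[
\frac{1}{\binom{t}{y}}\sum_{S\subseteq T,\,|S|=y} f(S)\ \geq\ \frac{y}{t}\,f(T).
\]
I would establish this by showing that the sequence $a_j=\binom{t}{j}^{-1}\sum_{|S|=j} f(S)$ of average values over size-$j$ subsets is concave in $j$; since $a_0=f(\emptyset)\geq 0$ and $a_t=f(T)$, concavity places $a_y$ above the chord joining the endpoints, giving $a_y\geq \frac{y}{t}f(T)+\frac{t-y}{t}f(\emptyset)\geq\frac{y}{t}f(T)$. To obtain concavity I would rewrite the consecutive difference $a_j-a_{j-1}$, by a double-counting over pairs (size-$(j-1)$ set, external element), as the uniform average of the marginal gains $f(R\cup\{x\})-f(R)$ over all such pairs. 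Comparing two consecutive such averages then reduces, through a triple-counting over triples $(S,x,z)$ with $|S|=j$, $x\notin S$, $z\in S$, to the termwise submodular inequality $f(S\cup\{x\})-f(S)\leq f((S\setminus\{z\})\cup\{x\})-f(S\setminus\{z\})$; after checking that the counting multiplicities on the two sides coincide, this forces the average marginal gain to be non-increasing in $j$, i.e. the first differences $a_j-a_{j-1}$ decrease, which is exactly concavity.

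Finally I would combine the two bounds. Multiplying the value inequality by $\binom{t}{y}$ and invoking the cost identity yields $\sum_{|S|=y} f(S)\geq \frac{f(T)}{c(T)}\sum_{|S|=y} c(S)$, hence $\sum_{|S|=y}\big(f(S)-\frac{f(T)}{c(T)}c(S)\big)\geq 0$. Therefore at least one size-$y$ subset $S^{\circ}$ satisfies $\frac{f(S^{\circ})}{c(S^{\circ})}\geq \frac{f(T)}{c(T)}$ (the cost being positive on the relevant sets), and by maximality $\frac{f(T_y)}{c(T_y)}\geq\frac{f(S^{\circ})}{c(S^{\circ})}\geq\frac{f(T)}{c(T)}$, as required. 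I expect the concavity step to be the main obstacle: the bookkeeping of the counting multiplicities must be carried out carefully so that the termwise submodular domination translates into a clean comparison of the two averaged marginals. Once the multiplicities are seen to match, monotonicity of $f$ enters only through $f(\emptyset)\geq 0$ in the endpoint interpolation, and everything else is forced by submodularity of $f$ and additivity of $c$.
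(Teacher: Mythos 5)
Your proof is correct, but it takes a genuinely different route from the paper's. The paper argues by a one-element descent: order $T=\{t_1,\ldots,t_\ell\}$ by nondecreasing cost, remove the maximum-cost element $t_\ell$, claim the ratio cannot decrease (using $c(\{t_\ell\})\geq c(T')/(\ell-1)$ together with the assertion that $\delta_\ell\leq f(T')/(\ell-1)$ ``by submodularity'', where $\delta_\ell=f(T)-f(T\setminus\{t_\ell\})$), and iterate down to size $y$. You instead average over all $\binom{t}{y}$ size-$y$ subsets of $T$: additivity of $c$ gives the exact average cost $\frac{y}{t}c(T)$; the concavity in $j$ of the averages $a_j$ gives average value at least $\frac{y}{t}f(T)$ (your triple-counting coupling is the standard proof, and the multiplicities do match, since each pair $(S,x)$ is hit $j$ times on one side, each pair $(R,x)$ is hit $t-j$ times on the other, and $j\binom{t}{j}=(t-j+1)\binom{t}{j-1}$); pigeonhole then finishes. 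The comparison is in fact lopsided in your favour: the paper's pivotal inequality $\delta_\ell\leq f(T')/(\ell-1)$ compares marginal gains of \emph{different} elements, which submodularity does not support, and it is false in general. Take $f(\emptyset)=f(\{a\})=0$, $f(\{b\})=f(\{a,b\})=1$ (monotone, nonnegative, submodular) with $c(\{a\})=1$, $c(\{b\})=2$: removing the maximum-cost element $b$ drops the ratio from $\frac{1}{3}$ to $0$, so the paper's chosen $T'$ does not witness its claim, while the lemma itself survives because the other singleton $\{b\}$ has ratio $\frac{1}{2}$. Your averaging argument is exactly the repair: it never commits to a particular element to delete (your case $y=t-1$, iterated, is a corrected version of the paper's descent), it handles all $y$ at once, and it uses monotonicity only through $f(\emptyset)\geq 0$, so it even extends to nonnegative non-monotone submodular $f$. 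The one caveat, shared with the paper's own statement and proof, is the implicit assumption that the relevant costs are positive, so that all ratios (and your final division by $c(S^{\circ})$) are well defined; this matches how the lemma is used in the paper, but it is worth stating explicitly.
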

\begin{proof}
We show the following equivalent statement: given a set $T\subseteq X$, there exists a set $T'\subseteq T$, such that $|T'|=|T|-1$ and $\frac{f(T)}{{c}(T)}\leq \frac{f(T')}{{c}(T')}$.
Let $T=\{t_1,t_2,\ldots,t_\ell\}$, where ${c}(\{t_i\}) \leq {c}(\{t_j\})$, for each $i<j$, and let $\delta_i$ denote the marginal increment given by adding element $t_{i}$ to the set $\{t_1,t_2,\ldots,t_{i-1}\}$, that is, $\delta_i = f(\{t_1,t_2,\ldots,t_{i}\}) - f(\{t_1,t_2,\ldots,t_{i-1}\})$. We have that $f(T) = \sum_{i=1}^\ell \delta_i$ and ${c}(T) = \sum_{i=1}^\ell {c}(\{t_i\})$. We define $T'=T\setminus \{t_\ell\}$, i.e. we remove from $T$ an element with the maximum cost. We obtain $f(T') = \sum_{i=1}^{\ell-1} \delta_i$ and ${c}(T') = \sum_{i=1}^{\ell-1} {c}(\{t_i\})$. Since ${c}(\{t_\ell\})$ is maximum, then ${c}(\{t_\ell\})\geq \frac{\sum_{i=1}^{\ell-1} {c}(\{t_i\})}{\ell - 1} = \frac{{c}(T')}{\ell - 1}$. Moreover, by submodularity of $f$, $\delta_\ell \leq \frac{\sum_{i=1}^{\ell-1} \delta_i}{\ell - 1} =  \frac{f(T')}{\ell - 1}$. Therefore:
\[
 \qquad\qquad\qquad\frac{f(T)}{c(T)} = \frac{f(T') + \delta_\ell}{{c}(T')+ {c}(\{t_\ell\})} \leq \frac{f(T') + \frac{f(T')}{\ell - 1}}{{c}(T') + \frac{{c}(T')}{\ell - 1}} = \frac{f(T')}{c(T')} .
\]
\end{proof}

The next theorem shows the approximation ratio of the algorithm.
The main idea is to consider the subset $\hat{T}$ that maximizes the ratio between the marginal increment and marginal cost in $L$ and to derive a series of inequalities to lead us state that this value is greater than the ratio given by the optimal subset $T^*$ times the factor $(1-\epsilon)$. We first compare the ratio computed for $\hat{T}$ with that for $T^*_{\frac{1}{\epsilon}}$ that is a subset of cardinality $\frac{1}{\epsilon}$ of maximal ratio, then, by rewriting the marginal cost formula according to its definition and by exploiting Lemmata~\ref{lem:difference} and~\ref{lem:ratio}, and Condition~\eqref{eq:epsilonblock} we compare this ratio to that given by the subset $T^*$ and this last inequality concludes the theorem.  
\begin{theorem}
If for each $T\subseteq X$ such that $|T|=\frac{1}{\epsilon}$ and for each $s \in S$ we have $\sum_{x\in T}c(s,x)\geq \frac{1}{\epsilon}c(s)$, then the list $L$ made of all the subsets of $X\setminus X'$ of size at most $\frac{1}{\epsilon}$ and cost at most $k$ is a $(1-\epsilon)$-list.
\end{theorem}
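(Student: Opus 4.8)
The plan is to show that the best-ratio set in $L$ competes, up to a factor $(1-\epsilon)$, with the optimal ratio set in $\mathcal{F}$. Let $T^*$ attain $\max\{g(T)/c_{\min}(T) : T\in\mathcal{F},\, c_{\min}(T)>0\}$; by the standing assumption stated just before the theorem we may take $|T^*|>\frac{1}{\epsilon}$, since otherwise $T^*\in L$ and there is nothing to prove. Write $s^*=s_{\min}(T^*)$ for the bin realizing the marginal cost of $T^*$, and let $c_{s^*}(U)=\sum_{x\in U}c(s^*,x)$ denote the (modular) cost of covering a subset $U\subseteq T^*$ with $s^*$ alone, so that $\bar{c}(T^*)=c_{s^*}(T^*)$ and $c_{\min}(T^*)=c_{S'}(s^*)+c_{s^*}(T^*)$. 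Let $\hat{T}$ be the maximizer of the ratio over $L$, as in line~\ref{generalalgo:greedymax} of Algorithm~\ref{generalalgo}.

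First I would pass from $T^*$ to a small subset. Since $g$ is monotone submodular (Lemma~\ref{lem:difference}) and $c_{s^*}$ is modular, Lemma~\ref{lem:ratio} applied with $y=\frac{1}{\epsilon}\leq|T^*|$ yields a subset $T^*_{\frac{1}{\epsilon}}\subseteq T^*$ with $|T^*_{\frac{1}{\epsilon}}|=\frac{1}{\epsilon}$ and $\frac{g(T^*_{\frac{1}{\epsilon}})}{c_{s^*}(T^*_{\frac{1}{\epsilon}})}\geq\frac{g(T^*)}{c_{s^*}(T^*)}$. Because $T^*_{\frac{1}{\epsilon}}\subseteq T^*\in\mathcal{F}$, covering it with $s^*$ costs no more than covering $T^*$ with $s^*$, and taking $s_{\min}(T^*_{\frac{1}{\epsilon}})$ only lowers the cost further; hence $T^*_{\frac{1}{\epsilon}}$ stays within budget and, having size $\frac{1}{\epsilon}$, belongs to $L$. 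Consequently $\frac{g(\hat{T})}{c_{\min}(\hat{T})}\geq\frac{g(T^*_{\frac{1}{\epsilon}})}{c_{\min}(T^*_{\frac{1}{\epsilon}})}$.

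The core of the argument is to control the bin-opening overhead in the marginal cost of $T^*_{\frac{1}{\epsilon}}$. Using $s^*$ as a candidate bin in the minimum defining $c_{\min}$ gives $c_{\min}(T^*_{\frac{1}{\epsilon}})\leq c_{S'}(s^*)+c_{s^*}(T^*_{\frac{1}{\epsilon}})$. Here Condition~\eqref{eq:epsilonblock} enters: since $|T^*_{\frac{1}{\epsilon}}|=\frac{1}{\epsilon}$, it gives $c_{s^*}(T^*_{\frac{1}{\epsilon}})=\sum_{x\in T^*_{\frac{1}{\epsilon}}}c(s^*,x)\geq\frac{1}{\epsilon}c(s^*)\geq\frac{1}{\epsilon}c_{S'}(s^*)$, i.e. $c_{S'}(s^*)\leq\epsilon\, c_{s^*}(T^*_{\frac{1}{\epsilon}})$, so that $c_{\min}(T^*_{\frac{1}{\epsilon}})\leq(1+\epsilon)\,c_{s^*}(T^*_{\frac{1}{\epsilon}})$. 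Combining this with the ratio inequality from the previous step and with $\bar{c}(T^*)=c_{s^*}(T^*)\leq c_{\min}(T^*)$, I obtain
\[
\frac{g(\hat{T})}{c_{\min}(\hat{T})} \geq \frac{g(T^*_{\frac{1}{\epsilon}})}{(1+\epsilon)\,c_{s^*}(T^*_{\frac{1}{\epsilon}})} \geq \frac{1}{1+\epsilon}\cdot\frac{g(T^*)}{c_{s^*}(T^*)} \geq \frac{1}{1+\epsilon}\cdot\frac{g(T^*)}{c_{\min}(T^*)} \geq (1-\epsilon)\frac{g(T^*)}{c_{\min}(T^*)},
\]
where the final step uses $\frac{1}{1+\epsilon}\geq 1-\epsilon$ for $\epsilon\in(0,1)$. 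This is exactly the defining inequality of a $(1-\epsilon)$-list.

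The main obstacle, and the only place requiring genuine care, is the third step: the marginal cost $c_{\min}$ is a minimum over all bins and carries a one-time opening cost $c_{S'}(s^*)$ that does not scale with the covered set, whereas Lemma~\ref{lem:ratio} only controls the modular within-bin cost $c_{s^*}$. Condition~\eqref{eq:epsilonblock} is precisely what is needed to absorb this opening cost into a $(1+\epsilon)$ factor once the set has size $\frac{1}{\epsilon}$; without it the overhead could dominate for sets whose within-bin cost is small, and no constant-factor comparison would be possible. I would also verify carefully the membership $T^*_{\frac{1}{\epsilon}}\in L$ (size bound and budget feasibility), which follows from monotonicity of the covering cost under taking subsets, and keep the bin $s^*$ fixed throughout so that $c_{s^*}$ remains modular when Lemma~\ref{lem:ratio} is invoked.
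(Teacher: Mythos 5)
Your proof is correct and follows the same overall route as the paper's: restrict $T^*$ to a subset of size $\frac{1}{\epsilon}$ via Lemma~\ref{lem:ratio} (with Lemma~\ref{lem:difference} supplying submodularity of $g$), use Condition~\eqref{eq:epsilonblock} to absorb the bin-opening cost $c_{S'}(\cdot)$ into a $(1+\epsilon)$ factor, and finish with $\frac{1}{1+\epsilon}\geq 1-\epsilon$. The differences are in the bookkeeping, and they work in your favour. The paper takes $T^*_{\frac{1}{\epsilon}}$ to be the size-$\frac{1}{\epsilon}$ subset maximizing $g/\bar{c}$, where $\bar{c}(U)$ is the association cost to $U$'s \emph{own} minimizing bin $s_{\min}(U)$; since that bin changes with $U$, the function $\bar{c}$ is not modular, so the paper's invocation of Lemma~\ref{lem:ratio} (whose hypothesis requires $c(T)=\sum_{x\in T}c(\{x\})$) is technically loose, and the paper must also invoke Condition~\eqref{eq:epsilonblock} a second time, on $T^*$ itself, in its closing denominator manipulation. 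By fixing the single bin $s^*=s_{\min}(T^*)$ throughout, your cost $c_{s^*}$ is genuinely modular, Lemma~\ref{lem:ratio} applies exactly as stated, Condition~\eqref{eq:epsilonblock} is needed only once (for $s^*$ and the size-$\frac{1}{\epsilon}$ subset), and the final step needs only the trivial bound $\bar{c}(T^*)=c_{s^*}(T^*)\leq c_{\min}(T^*)$; both chains lose the same $(1-\epsilon)$ factor. One shared point of informality is the membership of $T^*_{\frac{1}{\epsilon}}$ in $L$: your claim that switching from $s^*$ to $s_{\min}(T^*_{\frac{1}{\epsilon}})$ \emph{only lowers the cost} is not literally immediate, because $s_{\min}$ minimizes the marginal single-bin cost $c_{S'}(s)+\sum_{x}c(s,x)$, whereas membership in $\mathcal{F}$ is phrased via the full cost $c(S'\cup\{s_{\min}(\cdot)\},\cdot)$, in which elements may fall back on bins of $S'$; still, you at least address budget feasibility, which the paper's proof skips altogether.
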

\begin{proof}
We recall that $g(T)=f(X'\cup T) -f(X')$. Given a subset $T$ of $X\setminus X'$, we denote by $T_y$ a subset of $T$ such that $|T_y|= y$ and $\frac{f(T_y)}{\bar{c}(T_y)}$ is maximum.
Let $T^*$ be the subset of $X\setminus X'$ that maximizes the ratio between the marginal increment and the marginal cost.
Let $\hat{T}$ be the element of $L$ that maximizes $\frac{g(\hat{T})}{c_{\min}(\hat{T})}$. Since $|\hat{T}|\leq \frac{1}{\epsilon}$, then
\[
 \frac{g(\hat{T})}{c_{\min}(\hat{T})}\geq\frac{g\left(T_{\frac{1}{\epsilon}}^*\right)}{c_{\min}\left(T_{\frac{1}{\epsilon}}^*\right)} = \frac{g\left(T_{\frac{1}{\epsilon}}^*\right)}{c_{S'}\left(s_{\min}\left(T_{\frac{1}{\epsilon}}^*\right)\right)+\bar{c}\left(T_{\frac{1}{\epsilon}}^*\right)}.
\]
By the hypothesis of the theorem, $\bar{c}\left(T_{\frac{1}{\epsilon}}^*\right)\geq \frac{1}{\epsilon}c\left(s_{\min}\left(T_{\frac{1}{\epsilon}}^*\right)\right)$, moreover, $c\left(s_{\min}\left(T_{\frac{1}{\epsilon}}^*\right)\right) \geq c_{S'}\left(s_{\min}\left(T_{\frac{1}{\epsilon}}^*\right)\right)$ and then $c_{S'}\left(s_{\min}\left(T_{\frac{1}{\epsilon}}^*\right)\right)\leq \epsilon\bar{c}\left(T_{\frac{1}{\epsilon}}^*\right)$. Therefore,
\[
 \frac{g\left(T_{\frac{1}{\epsilon}}^*\right)}{c_{S'}\left(s_{\min}\left(T_{\frac{1}{\epsilon}}^*\right)\right)+\bar{c}\left(T_{\frac{1}{\epsilon}}^*\right)}\geq \frac{g\left(T_{\frac{1}{\epsilon}}^*\right)}{\epsilon \bar{c}\left(T_{\frac{1}{\epsilon}}^*\right)+\bar{c}\left(T_{\frac{1}{\epsilon}}^*\right)} = \frac{g\left(T_{\frac{1}{\epsilon}}^*\right)}{(\epsilon +1) \bar{c}\left(T_{\frac{1}{\epsilon}}^*\right)}.
\]
Since $f$ is monotone and submodular, then, by Lemma~\ref{lem:difference}, also $g\left(T_{\frac{1}{\epsilon}}^*\right)$ is submodular. By Lemma~\ref{lem:ratio} follows that 
\[
\frac{g\left(T_{\frac{1}{\epsilon}}^*\right)}{(\epsilon +1) \bar{c}\left(T_{\frac{1}{\epsilon}}^*\right)}\geq \frac{g(T^*)}{(\epsilon +1) \bar{c}(T^*)}.
\]
We now focus on the denominator, and we obtain that:
\begin{align*}
\frac{1}{(\epsilon +1) \bar{c}(T^*)} &=\frac{1+\epsilon -\epsilon}{(\epsilon +1) \bar{c}(T^*)}=
\frac{1}{\bar{c}(T^*)}-\frac{\epsilon}{(\epsilon + 1)\bar{c}(T^*)}\geq \\ 
& \frac{1}{\bar{c}(T^*)+c_{S'}(s_{\min}(T^*))}-\frac{\epsilon}{\epsilon \bar{c}(T^*)+\bar{c}(T^*)}.
\end{align*}
By applying the hypothesis $\bar{c}(T^*)\geq \frac{1}{\epsilon}c(s_{\min}(T^*))$, it follows that:
\begin{align*}
&\frac{1}{\bar{c}(T^*)+c_{S'}(s_{\min}(T^*))}-\frac{\epsilon}{\epsilon \bar{c}(T^*)+\bar{c}(T^*)}\geq\\
&\frac{1}{\bar{c}(T^*)+c_{S'}(s_{\min}(T^*))}-\frac{\epsilon}{c(s_{\min}(T^*))+\bar{c}(T^*)}\geq\\
&\frac{1}{\bar{c}(T^*)+c_{S'}(s_{\min}(T^*))}-\frac{\epsilon}{\bar{c}(T^*)+c_{S'}(s_{\min}(T^*))} = \frac{1-\epsilon}{c_{\min}(T^*)}.
\end{align*}
%
%
%
To conclude:
\[
\frac{g(\hat{T})}{c_{\min}(\hat{T})}\geq (1-\epsilon)\frac{g(T^*)}{c_{\min}(T^*)}.
\]
\end{proof}


\begin{section}{Computing an $\alpha$-list for the general case}\label{sec:slot}
In this section we give a polynomial time algorithm that builds a $\left(1-\frac{1}{e}\right)(1-\epsilon)$-list with respect to a partial solution $(S',X')$, for any $\epsilon\in (0,1)$. Using this algorithm as routine at line~\ref{generalalgo:alpha} of Algorithm~\ref{generalalgo}, we can guarantee an approximation factor of 
\[
\frac{1}{2}\left(1-\frac{1}{e^{\left(1-\frac{1}{e}\right)(1-\epsilon)}}\right)
\]
for \probl. We observe that this case generalizes the non-stochastic adaptive seeding with knapsack constraints problem, which cannot be approximated within a factor greater than $\left(1-\frac{1}{e^{1-\frac{1}{e}}}\right)$, unless $P=NP$~\cite{RSS15}. 
Then, the gap between the approximation factor of our algorithm and the best achievable one in polynomial time is $\frac{1}{2}$, unless $P=NP$.


In the algorithm of this section we make use of a procedure called \greedymaxcover to maximize the value of a monotone submodular function $g:2^X\rightarrow \mathbb{R}_{\geq 0}$, given a certain budget and costs associated to the elements of $X$. It is well-known that there exists a polynomial-time procedure that guarantees a $\left(1-\frac{1}{e}\right)$-approximation for this problem~\cite{S04}. 

Let us denote by  $\hat{c}$ the minimum possible positive value of functions $c(s)$ and $c(s,x)$, amongst all elements $x$ and bins $s$, i.e. $\hat{c} =\min\{ \min\{c(s) : s\in S,  c(s)>0\} , \min\{c(s,x) : s\in S, x\in X, c(s,x)>0\}\}$.

The main idea is to build an $\alpha$-list $L$ which contains approximate solutions to the problem of maximizing a monotone submodular set function subject to a knapsack constraint in which the budget increases by a factor $1+\epsilon$ starting from $\hat{c}$, and the cost of the elements are given by the cost of associating them to a single bin. In particular, we consider $q= \left\lfloor\log_{1+\epsilon} \left(\frac{k}{\hat{c}}\right)\right\rfloor + 1$ different budgets $B_i$ that iteratively increase by a factor $1+\epsilon$, i.e. $B_0=\hat{c}$ and $B_i = (1+\epsilon) B_{i-1}$, for $i=1,\ldots, q$. Moreover we define $B_{q+1}=k$. For each $i=0,\ldots, q+1$ and for each bin $s\in S$, we apply procedure \greedymaxcover with ground set $X$, budget $B_i$, and the cost of associating the elements to bin $s$ as cost function. Then, we add the set returned by \greedymaxcover to $L$.
In this way we consider a budget that is at most a factor $1+\epsilon$ greater than the cost of an optimal solution and the solution returned by \greedymaxcover for this budget has a value that is at most $1-\frac{1}{e}$ times  smaller than that of the optimal solution.

The pseudo-code of the algorithm is reported in Algorithm~\ref{ExpBudgetList}. The outer cycle at lines~\ref{ExpBudgetList:outerstart}--\ref{ExpBudgetList:outerend} iteratively selects a bin $s$ in $S$ and finds a list of sets of elements assigned to bin $s$. The inner cycle at lines~\ref{ExpBudgetList:innerstart}--\ref{ExpBudgetList:innerend}, at each iteration $i$, calls procedure \greedymaxcover which uses $g$ as function to maximize, $\hat{c}(1+\epsilon)^i -c_{S'}(s)$ as budget and the cost of associating the elements to bin $s$ as cost function (to compute this costs, we only pass $s$ as a parameter to \greedymaxcover).
The budget is increased by a factor $(1+\epsilon)$ until $\hat{c}(1+\epsilon)^i \geq k$. Finally the algorithm runs \greedymaxcover with the full budget $k$. See Figure \ref{ExpImg} for an illustration. 

\begin{figure}[t]
\scalebox{0.7}{
\ifx\du\undefined
  \newlength{\du}
\fi
\setlength{\du}{15\unitlength}
\begin{tikzpicture}
\pgftransformxscale{1.000000}
\pgftransformyscale{-1.000000}
\definecolor{dialinecolor}{rgb}{0.000000, 0.000000, 0.000000}
\pgfsetstrokecolor{dialinecolor}
\definecolor{dialinecolor}{rgb}{1.000000, 1.000000, 1.000000}
\pgfsetfillcolor{dialinecolor}
\pgfsetlinewidth{0.100000\du}
\pgfsetdash{}{0pt}
\pgfsetdash{}{0pt}
\pgfsetbuttcap
{
\definecolor{dialinecolor}{rgb}{0.000000, 0.000000, 0.000000}
\pgfsetfillcolor{dialinecolor}
\definecolor{dialinecolor}{rgb}{0.000000, 0.000000, 0.000000}
\pgfsetstrokecolor{dialinecolor}
\draw (7.934493\du,9.004594\du)--(37.634493\du,9.004594\du);
}
\pgfsetlinewidth{0.100000\du}
\pgfsetdash{}{0pt}
\pgfsetdash{}{0pt}
\pgfsetbuttcap
{
\definecolor{dialinecolor}{rgb}{0.000000, 0.000000, 0.000000}
\pgfsetfillcolor{dialinecolor}
\definecolor{dialinecolor}{rgb}{0.000000, 0.000000, 0.000000}
\pgfsetstrokecolor{dialinecolor}
\draw (7.955204\du,8.987726\du)--(7.955204\du,8.351330\du);
}
\pgfsetlinewidth{0.100000\du}
\pgfsetdash{}{0pt}
\pgfsetdash{}{0pt}
\pgfsetbuttcap
{
\definecolor{dialinecolor}{rgb}{0.000000, 0.000000, 0.000000}
\pgfsetfillcolor{dialinecolor}
\definecolor{dialinecolor}{rgb}{0.000000, 0.000000, 0.000000}
\pgfsetstrokecolor{dialinecolor}
\draw (9.380527\du,8.970551\du)--(9.380527\du,8.334155\du);
}
\pgfsetlinewidth{0.100000\du}
\pgfsetdash{}{0pt}
\pgfsetdash{}{0pt}
\pgfsetbuttcap
{
\definecolor{dialinecolor}{rgb}{0.000000, 0.000000, 0.000000}
\pgfsetfillcolor{dialinecolor}
\definecolor{dialinecolor}{rgb}{0.000000, 0.000000, 0.000000}
\pgfsetstrokecolor{dialinecolor}
\draw (11.462956\du,8.938731\du)--(11.462956\du,8.302335\du);
}
\pgfsetlinewidth{0.100000\du}
\pgfsetdash{}{0pt}
\pgfsetdash{}{0pt}
\pgfsetbuttcap
{
\definecolor{dialinecolor}{rgb}{0.000000, 0.000000, 0.000000}
\pgfsetfillcolor{dialinecolor}
\definecolor{dialinecolor}{rgb}{0.000000, 0.000000, 0.000000}
\pgfsetstrokecolor{dialinecolor}
\draw (14.641401\du,8.977622\du)--(14.641401\du,8.341226\du);
}
\pgfsetlinewidth{0.100000\du}
\pgfsetdash{}{0pt}
\pgfsetdash{}{0pt}
\pgfsetbuttcap
{
\definecolor{dialinecolor}{rgb}{0.000000, 0.000000, 0.000000}
\pgfsetfillcolor{dialinecolor}
\definecolor{dialinecolor}{rgb}{0.000000, 0.000000, 0.000000}
\pgfsetstrokecolor{dialinecolor}
\draw (19.729034\du,8.945802\du)--(19.729034\du,8.309406\du);
}
\pgfsetlinewidth{0.100000\du}
\pgfsetdash{}{0pt}
\pgfsetdash{}{0pt}
\pgfsetbuttcap
{
\definecolor{dialinecolor}{rgb}{0.000000, 0.000000, 0.000000}
\pgfsetfillcolor{dialinecolor}
\definecolor{dialinecolor}{rgb}{0.000000, 0.000000, 0.000000}
\pgfsetstrokecolor{dialinecolor}
\draw (37.579945\du,8.984693\du)--(37.579945\du,8.348297\du);
}
\pgfsetlinewidth{0.100000\du}
\pgfsetdash{}{0pt}
\pgfsetdash{}{0pt}
\pgfsetbuttcap
{
\definecolor{dialinecolor}{rgb}{0.000000, 0.000000, 0.000000}
\pgfsetfillcolor{dialinecolor}
\definecolor{dialinecolor}{rgb}{0.000000, 0.000000, 0.000000}
\pgfsetstrokecolor{dialinecolor}
\draw (35.278312\du,8.988229\du)--(35.278312\du,8.351833\du);
}
\definecolor{dialinecolor}{rgb}{0.000000, 0.000000, 0.000000}
\pgfsetstrokecolor{dialinecolor}
\node[anchor=west] at (23.794396\du,8.351330\du){};
\definecolor{dialinecolor}{rgb}{0.000000, 0.000000, 0.000000}
\pgfsetstrokecolor{dialinecolor}
\node[anchor=west] at (26.269269\du,8.457396\du){.   .   .};
\definecolor{dialinecolor}{rgb}{0.000000, 0.000000, 0.000000}
\pgfsetstrokecolor{dialinecolor}
\node[anchor=west] at (7.495584\du,9.977676\du){$0$};
\definecolor{dialinecolor}{rgb}{0.000000, 0.000000, 0.000000}
\pgfsetstrokecolor{dialinecolor}
\node[anchor=west] at (37.084468\du,9.949460\du){$k$};
\definecolor{dialinecolor}{rgb}{0.000000, 0.000000, 0.000000}
\pgfsetstrokecolor{dialinecolor}
\node at (9.404773\du,7.927066\du){$\hat{c}$};
\definecolor{dialinecolor}{rgb}{0.000000, 0.000000, 0.000000}
\pgfsetstrokecolor{dialinecolor}
\node at (11.526093\du,7.891711\du){$\hat{c}(1+\epsilon)$};
\definecolor{dialinecolor}{rgb}{0.000000, 0.000000, 0.000000}
\pgfsetstrokecolor{dialinecolor}
\node at (35.284881\du,7.962422\du){$\hat{c}(1+\epsilon)^i$};
\definecolor{dialinecolor}{rgb}{0.000000, 0.000000, 0.000000}
\pgfsetstrokecolor{dialinecolor}
\node at (14.637363\du,7.891711\du){$\hat{c}(1+\epsilon)^2$};
\definecolor{dialinecolor}{rgb}{0.000000, 0.000000, 0.000000}
\pgfsetstrokecolor{dialinecolor}
\node at (19.728532\du,7.927066\du){$\hat{c}(1+\epsilon)^3$};
\end{tikzpicture}
}
\centering
\caption{Growth of the budget $B_i$ in the inner cycle of the algorithm.}
\label{ExpImg}
\end{figure}
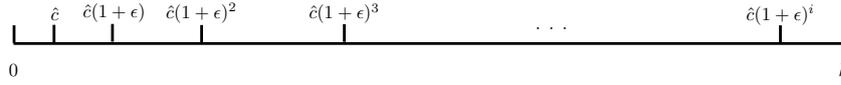


%

\begin{algorithm}[t]
    \SetKwInOut{Input}{Input}
    \SetKwInOut{Output}{Output} 
    \Input{$S, X,S',X', k, \epsilon$}
    \Output{$L$}
    $L := \emptyset$\;
    \ForEach{$s\in S$}{\label{ExpBudgetList:outerstart}
		$i := 0$\;
    		\While{$\hat{c}(1+\epsilon)^i<k$}{\label{ExpBudgetList:innerstart}
    				$B_i := \hat{c}(1+\epsilon)^i$\;
    				$T_i(s) := \greedymaxcover(X,s,B_i-c_{S'}(s))$\;
				$L := L\cup \{T_i(s)\}$\;    				
    				$i := i+1$
    		}\label{ExpBudgetList:innerend}
    $B_i := k$\;
    $T_i(s) := \greedymaxcover(X,s,B_i-c_{S'}(s))$\;
    $L := L\cup \{T_i(s)\}$\;
    }\label{ExpBudgetList:outerend}
    \Return $L$\;
    \caption{Exponential Budget Greedy}
    \label{ExpBudgetList}
\end{algorithm}

We call $q$ the value of $i$ at the end of the last iteration in the inner cycle of the algorithm. Let $T_j$ be the set in $L$ that maximizes the ratio between $g(T_j)$ and its assigned budget, that is:
\begin{equation}\label{eq:maxtj}
T_j=\arg\max\left\{\frac{g(T_i(s))}{B_i}: s\in S, i=0,1,\ldots,q+1\right\}.
\end{equation}
In order to bound the approximation ratio, we consider $\bar{X}^*$ as the set with the optimal ratio $\frac{g(\bar{X}^*)}{c_{\min}(\bar{X}^*)}$ amongst any possible subset of items. Let $B_l$ be the smallest value of $B_i$, for $i\in \{0,1,\dots,q+1\}$, that is greater than or equal to the cost of an optimal solution, that is the smallest $B_l$ such that $B_l \geq c_{\min}(\bar{X}^*)$. See figure \ref{OptImg} for an illustration.
We call $T^*_l$ the set in $L$ that has the highest ratio $\frac{g(T_l)}{B_l}$ amongst those computed by \greedymaxcover with budget $B_l$, i.e. $T^*_l = \max\left\{ \frac{g(T_l(s))}{B_l}: s\in S \right\}$.
We also denote the set that maximizes $g(X_l^*)$ with budget $B_l$ by $X_l^*$.

The idea of the approximation analysis is that an optimal solution $\bar{X}^*$ has a value of $g$ that is at most $g(X^*_l)$ and a cost that is at most $1+\epsilon$ times smaller than $B_l$, while the number of iterations remains polynomial since the size of the intervals grows exponentially. The next two theorems show the bounds on approximation ratio, computational complexity and size of $L$.

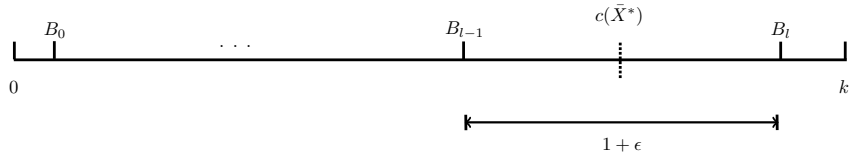
\begin{figure}[t]
\scalebox{0.7}{
\ifx\du\undefined
  \newlength{\du}
\fi
\setlength{\du}{15\unitlength}
\begin{tikzpicture}
\pgftransformxscale{1.000000}
\pgftransformyscale{-1.000000}
\definecolor{dialinecolor}{rgb}{0.000000, 0.000000, 0.000000}
\pgfsetstrokecolor{dialinecolor}
\definecolor{dialinecolor}{rgb}{1.000000, 1.000000, 1.000000}
\pgfsetfillcolor{dialinecolor}
\pgfsetlinewidth{0.100000\du}
\pgfsetdash{}{0pt}
\pgfsetdash{}{0pt}
\pgfsetbuttcap
{
\definecolor{dialinecolor}{rgb}{0.000000, 0.000000, 0.000000}
\pgfsetfillcolor{dialinecolor}
\definecolor{dialinecolor}{rgb}{0.000000, 0.000000, 0.000000}
\pgfsetstrokecolor{dialinecolor}
\draw (7.934493\du,9.004594\du)--(37.634493\du,9.004594\du);
}
\pgfsetlinewidth{0.100000\du}
\pgfsetdash{}{0pt}
\pgfsetdash{}{0pt}
\pgfsetbuttcap
{
\definecolor{dialinecolor}{rgb}{0.000000, 0.000000, 0.000000}
\pgfsetfillcolor{dialinecolor}
\definecolor{dialinecolor}{rgb}{0.000000, 0.000000, 0.000000}
\pgfsetstrokecolor{dialinecolor}
\draw (7.955204\du,8.987726\du)--(7.955204\du,8.351330\du);
}
\pgfsetlinewidth{0.100000\du}
\pgfsetdash{}{0pt}
\pgfsetdash{}{0pt}
\pgfsetbuttcap
{
\definecolor{dialinecolor}{rgb}{0.000000, 0.000000, 0.000000}
\pgfsetfillcolor{dialinecolor}
\definecolor{dialinecolor}{rgb}{0.000000, 0.000000, 0.000000}
\pgfsetstrokecolor{dialinecolor}
\draw (9.380527\du,8.970551\du)--(9.380527\du,8.334155\du);
}
\pgfsetlinewidth{0.100000\du}
\pgfsetdash{}{0pt}
\pgfsetdash{}{0pt}
\pgfsetbuttcap
{
\definecolor{dialinecolor}{rgb}{0.000000, 0.000000, 0.000000}
\pgfsetfillcolor{dialinecolor}
\definecolor{dialinecolor}{rgb}{0.000000, 0.000000, 0.000000}
\pgfsetstrokecolor{dialinecolor}
\draw (23.971675\du,8.981158\du)--(23.971675\du,8.344762\du);
}
\pgfsetlinewidth{0.100000\du}
\pgfsetdash{}{0pt}
\pgfsetdash{}{0pt}
\pgfsetbuttcap
{
\definecolor{dialinecolor}{rgb}{0.000000, 0.000000, 0.000000}
\pgfsetfillcolor{dialinecolor}
\definecolor{dialinecolor}{rgb}{0.000000, 0.000000, 0.000000}
\pgfsetstrokecolor{dialinecolor}
\draw (37.579945\du,8.984693\du)--(37.579945\du,8.348297\du);
}
\pgfsetlinewidth{0.100000\du}
\pgfsetdash{}{0pt}
\pgfsetdash{}{0pt}
\pgfsetbuttcap
{
\definecolor{dialinecolor}{rgb}{0.000000, 0.000000, 0.000000}
\pgfsetfillcolor{dialinecolor}
\definecolor{dialinecolor}{rgb}{0.000000, 0.000000, 0.000000}
\pgfsetstrokecolor{dialinecolor}
\draw (35.278312\du,8.988229\du)--(35.278312\du,8.351833\du);
}
\definecolor{dialinecolor}{rgb}{0.000000, 0.000000, 0.000000}
\pgfsetstrokecolor{dialinecolor}
\node[anchor=west] at (23.794396\du,8.351330\du){};
\definecolor{dialinecolor}{rgb}{0.000000, 0.000000, 0.000000}
\pgfsetstrokecolor{dialinecolor}
\node[anchor=west] at (14.990916\du,8.492752\du){.   .   .};
\definecolor{dialinecolor}{rgb}{0.000000, 0.000000, 0.000000}
\pgfsetstrokecolor{dialinecolor}
\node[anchor=west] at (7.495584\du,9.977676\du){$0$};
\definecolor{dialinecolor}{rgb}{0.000000, 0.000000, 0.000000}
\pgfsetstrokecolor{dialinecolor}
\node[anchor=west] at (37.084468\du,9.949460\du){$k$};
\definecolor{dialinecolor}{rgb}{0.000000, 0.000000, 0.000000}
\pgfsetstrokecolor{dialinecolor}
\node at (9.404773\du,7.927066\du){$B_0$};
\definecolor{dialinecolor}{rgb}{0.000000, 0.000000, 0.000000}
\pgfsetstrokecolor{dialinecolor}
\node at (24.006528\du,7.927066\du){$B_{l-1}$};
\definecolor{dialinecolor}{rgb}{0.000000, 0.000000, 0.000000}
\pgfsetstrokecolor{dialinecolor}
\node at (35.284881\du,7.927066\du){$B_l$};
\pgfsetlinewidth{0.100000\du}
\pgfsetdash{{1.000000\du}{1.000000\du}}{0\du}
\pgfsetdash{{1.000000\du}{1.000000\du}}{0\du}
\pgfsetbuttcap
{
\definecolor{dialinecolor}{rgb}{0.000000, 0.000000, 0.000000}
\pgfsetfillcolor{dialinecolor}
}
\definecolor{dialinecolor}{rgb}{0.000000, 0.000000, 0.000000}
\pgfsetstrokecolor{dialinecolor}
\draw (29.557316\du,8.886686\du)--(29.557316\du,9.159478\du);
\pgfsetlinewidth{0.100000\du}
\pgfsetdash{}{0pt}
\definecolor{dialinecolor}{rgb}{0.000000, 0.000000, 0.000000}
\pgfsetstrokecolor{dialinecolor}
\draw (29.557316\du,8.386686\du)--(29.557316\du,8.486686\du);
\definecolor{dialinecolor}{rgb}{0.000000, 0.000000, 0.000000}
\pgfsetstrokecolor{dialinecolor}
\draw (29.557316\du,8.553352\du)--(29.557316\du,8.653352\du);
\definecolor{dialinecolor}{rgb}{0.000000, 0.000000, 0.000000}
\pgfsetstrokecolor{dialinecolor}
\draw (29.557316\du,8.720019\du)--(29.557316\du,8.820019\du);
\pgfsetlinewidth{0.100000\du}
\pgfsetdash{}{0pt}
\definecolor{dialinecolor}{rgb}{0.000000, 0.000000, 0.000000}
\pgfsetstrokecolor{dialinecolor}
\draw (29.557316\du,9.659478\du)--(29.557316\du,9.559478\du);
\definecolor{dialinecolor}{rgb}{0.000000, 0.000000, 0.000000}
\pgfsetstrokecolor{dialinecolor}
\draw (29.557316\du,9.492811\du)--(29.557316\du,9.392811\du);
\definecolor{dialinecolor}{rgb}{0.000000, 0.000000, 0.000000}
\pgfsetstrokecolor{dialinecolor}
\draw (29.557316\du,9.326144\du)--(29.557316\du,9.226144\du);
\definecolor{dialinecolor}{rgb}{0.000000, 0.000000, 0.000000}
\pgfsetstrokecolor{dialinecolor}
\node at (29.521961\du,7.396736\du){$c(\bar{X}^*)$};
\pgfsetlinewidth{0.080000\du}
\pgfsetdash{}{0pt}
\pgfsetdash{}{0pt}
\pgfsetbuttcap
{
\definecolor{dialinecolor}{rgb}{0.000000, 0.000000, 0.000000}
\pgfsetfillcolor{dialinecolor}
\pgfsetarrowsstart{to}
\pgfsetarrowsend{to}
\definecolor{dialinecolor}{rgb}{0.000000, 0.000000, 0.000000}
\pgfsetstrokecolor{dialinecolor}
\draw (24.015367\du,11.250468\du)--(35.205331\du,11.250468\du);
}
\definecolor{dialinecolor}{rgb}{0.000000, 0.000000, 0.000000}
\pgfsetstrokecolor{dialinecolor}
\node at (29.619188\du,12.063641\du){$1+\epsilon$};
\pgfsetlinewidth{0.100000\du}
\pgfsetdash{}{0pt}
\pgfsetdash{}{0pt}
\pgfsetbuttcap
{
\definecolor{dialinecolor}{rgb}{0.000000, 0.000000, 0.000000}
\pgfsetfillcolor{dialinecolor}
\definecolor{dialinecolor}{rgb}{0.000000, 0.000000, 0.000000}
\pgfsetstrokecolor{dialinecolor}
\draw (24.049508\du,10.977629\du)--(24.055758\du,11.533879\du);
}
\pgfsetlinewidth{0.100000\du}
\pgfsetdash{}{0pt}
\pgfsetdash{}{0pt}
\pgfsetbuttcap
{
\definecolor{dialinecolor}{rgb}{0.000000, 0.000000, 0.000000}
\pgfsetfillcolor{dialinecolor}
\definecolor{dialinecolor}{rgb}{0.000000, 0.000000, 0.000000}
\pgfsetstrokecolor{dialinecolor}
\draw (35.148459\du,10.972562\du)--(35.154709\du,11.528812\du);
}
\end{tikzpicture}
}
\centering
\caption{Notation used in Theorem \ref{ApproxExpBudgetList}.}
\label{OptImg}
\end{figure}

\begin{theorem}\label{ApproxExpBudgetList}
The list $L$ built by Algorithm \ref{ExpBudgetList}, is a  $\left(1-\frac{1}{e}\right)(1-\epsilon)$-list.
\end{theorem}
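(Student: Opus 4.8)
The plan is to show that the best ratio in the list $L$, namely $\frac{g(T_j)}{B_j}$ for the maximizing set $T_j$ of \eqref{eq:maxtj}, is at least $\left(1-\frac{1}{e}\right)(1-\epsilon)$ times the optimal ratio $\frac{g(\bar{X}^*)}{c_{\min}(\bar{X}^*)}$. First I would locate the budget level $B_l$ defined in the setup, i.e. the smallest $B_i$ with $B_l \geq c_{\min}(\bar{X}^*)$. Since $B_{i}=(1+\epsilon)B_{i-1}$ and $B_l$ is the \emph{smallest} such level, we have $B_{l-1} < c_{\min}(\bar{X}^*)$, and hence $B_l = (1+\epsilon)B_{l-1} < (1+\epsilon)\,c_{\min}(\bar{X}^*)$; this is exactly the factor-$(1+\epsilon)$ overshoot depicted in Figure~\ref{OptImg}, and it handles the $(1-\epsilon)$ loss. (The boundary case where $c_{\min}(\bar{X}^*)$ falls between $B_q$ and $k=B_{q+1}$ is treated separately, using $B_{q+1}=k \geq c_{\min}(\bar X^*)$ directly.)

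Next I would control the numerator. The crucial observation is that $\bar{X}^*$ is associated to a single bin $\bar{s}=s_{\min}(\bar{X}^*)$, so $\bar{X}^*$ is a feasible candidate for the call $\greedymaxcover(X,\bar{s},B_l - c_{S'}(\bar{s}))$, because the cost of associating $\bar X^*$ to $\bar s$ together with the opening cost $c_{S'}(\bar s)$ is exactly $c_{\min}(\bar{X}^*) \leq B_l$. Since $g(T)=f(X'\cup T)-f(X')$ is monotone submodular by Lemma~\ref{lem:difference}, the $\left(1-\frac{1}{e}\right)$-approximation guarantee of \greedymaxcover yields $g(T_l(\bar s)) \geq \left(1-\frac{1}{e}\right) g(X_l^*) \geq \left(1-\frac{1}{e}\right) g(\bar{X}^*)$, where the second inequality holds because $X_l^*$ maximizes $g$ under budget $B_l$ and $\bar X^*$ is feasible at that budget. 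By definition $T^*_l$ has ratio at least that of $T_l(\bar s)$ at budget $B_l$, so $g(T^*_l) \geq \left(1-\frac{1}{e}\right) g(\bar X^*)$ as well.

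Combining the two estimates gives the result. Using the maximality of $T_j$ over all sets and budgets in $L$, together with $T^*_l \in L$ at budget $B_l$, I would write
\begin{align*}
\frac{g(T_j)}{B_j} \;\geq\; \frac{g(T^*_l)}{B_l}
\;\geq\; \frac{\left(1-\frac{1}{e}\right) g(\bar X^*)}{(1+\epsilon)\,c_{\min}(\bar X^*)}
\;\geq\; \left(1-\frac{1}{e}\right)(1-\epsilon)\,\frac{g(\bar X^*)}{c_{\min}(\bar X^*)},
\end{align*}
where the last step uses $\frac{1}{1+\epsilon} \geq 1-\epsilon$. Finally I would translate the ratio bound on $B_j$ into the required bound on $c_{\min}(T_j)$: since $T_j$ is produced by \greedymaxcover under a budget $B_j$ assigned to some bin $s$, its actual marginal cost $c_{\min}(T_j)$ does not exceed $B_j$, so $\frac{g(T_j)}{c_{\min}(T_j)} \geq \frac{g(T_j)}{B_j}$, which certifies that $L$ contains a set attaining at least $\left(1-\frac{1}{e}\right)(1-\epsilon)$ of the optimal ratio, as required by the definition of an $\alpha$-list.

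I expect the main obstacle to be the bookkeeping around the marginal cost versus the assigned budget. The ratios inside $L$ are measured against the \emph{budget} $B_i$ rather than the true marginal cost $c_{\min}(T_i(s))$, so one must argue carefully that the correct bin $\bar s$ is used when invoking the \greedymaxcover guarantee (so that $\bar X^*$ is genuinely feasible), and that passing from $B_j$ back to $c_{\min}(T_j)$ only \emph{increases} the ratio. Handling the definition of $c_{\min}$ with the already-paid bins in $S'$ (through $c_{S'}$ and the budget $B_i - c_{S'}(s)$ fed to \greedymaxcover) is the delicate point that ties the single-bin structure of $\bar X^*$ to the feasibility of the greedy call.
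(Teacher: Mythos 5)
Your proof is correct and follows essentially the same route as the paper's: the identical chain $\frac{g(T_j)}{c_{\min}(T_j)}\geq\frac{g(T_j)}{B_j}\geq\frac{g(T^*_l)}{B_l}\geq\left(1-\frac{1}{e}\right)\frac{g(\bar{X}^*)}{B_l}\geq\left(1-\frac{1}{e}\right)(1-\epsilon)\frac{g(\bar{X}^*)}{c_{\min}(\bar{X}^*)}$, using $B_l\leq(1+\epsilon)c_{\min}(\bar{X}^*)$ and $\frac{1}{1+\epsilon}\geq 1-\epsilon$. You are in fact slightly more explicit than the paper about why $\bar{X}^*$ is feasible for the \greedymaxcover call on bin $s_{\min}(\bar{X}^*)$ with budget $B_l-c_{S'}(s_{\min}(\bar{X}^*))$, and about the boundary case $B_l=k$; just note that in that case the bound $B_l\leq(1+\epsilon)c_{\min}(\bar{X}^*)$ follows from the loop exit condition $k\leq(1+\epsilon)B_q$ together with $B_q<c_{\min}(\bar{X}^*)$, not merely from $k\geq c_{\min}(\bar{X}^*)$.
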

\begin{proof}
Since, by construction, $c_{\min}(T_j)\leq B_j$, and, by Equation~\ref{eq:maxtj}, $\frac{g(T_j)}{B_j}$ is maximum, then
\[
\frac{g(T_j)}{c_{\min}(T_j)}\geq\frac{g(T_j)}{B_j}\geq \frac{g(T^*_l)}{B_l}.
\]
Procedure \greedymaxcover guarantees a $\left(1-\frac{1}{e}\right)$-approximation, then 
\[
\frac{g(T^*_l)}{B_l}\geq \left(1-\frac{1}{e}\right)\frac{g(X_l^*)}{B_l}.
\]
Moreover, since function $g$ is monotone and $c_{\min}(\bar{X}^*)\leq B_l$, then $g(X_l^*)\geq g(\bar{X}^*)$, and therefore:
\[
\left(1-\frac{1}{e}\right)\frac{g(X_l^*)}{B_l}\geq \left(1-\frac{1}{e}\right)\frac{g(\bar{X}^*)}{B_l}.
\]
We defined $B_l$ as the smallest value of $B_i$ that is at least $c_{\min}(\bar{X}^*)$, this implies that $B_{l-1}\leq c_{\min}(\bar{X}^*)$. Moreover the ratio between $B_l$ and $B_{l-1}$ is $1+\epsilon$.
It follows that $B_l\leq (1+\epsilon)c_{\min}(\bar{X}^*)$, which implies:
\[
\left(1-\frac{1}{e}\right)\frac{g(\bar{X}^*)}{B_l}\geq \left(1-\frac{1}{e}\right)\left(\frac{1}{1+\epsilon}\right)\frac{g(\bar{X}^*)}{c_{\min}(\bar{X}^*)}\geq \left(1-\frac{1}{e}\right)\left(1-\epsilon\right)\frac{g(\bar{X}^*)}{c_{\min}(\bar{X}^*)}
\]
The last inequality holds since $\frac{1}{1+\epsilon}=1-\frac{\epsilon}{1+\epsilon}\geq 1-\epsilon$, for any $\epsilon>0$, and this concludes the proof.
\end{proof}

\begin{theorem}\label{ExpBudgetList:compl}
Algorithm \ref{ExpBudgetList} requires $O\left(\frac{1}{\epsilon} m\cdot {\text gr(n)}\cdot \log\frac{k}{\hat{c}}\right)$ computational time, where $\text gr(n)$ is the computational time of \greedymaxcover, and $|L_{\max}| = O\left(\frac{1}{\epsilon} m \log\frac{k}{\hat{c}}\right)$.
\end{theorem}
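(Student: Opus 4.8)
The plan is to establish both bounds by directly counting the invocations of \greedymaxcover, since every call dominates the work of a single loop iteration (the remaining operations in each iteration---assigning $B_i$, adding one set to $L$, incrementing $i$---cost $O(1)$ each, ignoring the trivial set-insertion overhead) and, crucially, each call appends exactly one set to $L$. Thus if $N$ denotes the total number of calls, the running time is $O(N\cdot \mathrm{gr}(n))$ and $|L_{\max}| = O(N)$, and the whole theorem reduces to showing $N = O\!\left(\frac{1}{\epsilon}\, m \log\frac{k}{\hat{c}}\right)$.

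The only non-bookkeeping step, and the part I expect to require care, is bounding the number $q$ of iterations of the inner \textbf{while} loop at lines~\ref{ExpBudgetList:innerstart}--\ref{ExpBudgetList:innerend}. That loop increments $i$ precisely while $\hat{c}(1+\epsilon)^i < k$, so it halts at the smallest index with $(1+\epsilon)^i \geq k/\hat{c}$; hence $q \leq \log_{1+\epsilon}\!\left(\frac{k}{\hat{c}}\right) + 1$, which matches the value $q = \left\lfloor\log_{1+\epsilon}\left(\frac{k}{\hat{c}}\right)\right\rfloor + 1$ introduced in the text before the algorithm. I would then change the base of the logarithm, writing $\log_{1+\epsilon}\!\left(\frac{k}{\hat{c}}\right) = \frac{\ln(k/\hat{c})}{\ln(1+\epsilon)}$, and control the denominator using the standard estimate $\ln(1+\epsilon) \geq \frac{\epsilon}{2}$ valid for all $\epsilon \in (0,1)$ (which follows from $\ln(1+\epsilon) \geq \epsilon - \frac{\epsilon^2}{2} \geq \frac{\epsilon}{2}$ on this range). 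This yields $q = O\!\left(\frac{1}{\epsilon}\log\frac{k}{\hat{c}}\right)$, so $\frac{1}{\ln(1+\epsilon)} = O(1/\epsilon)$ is the one analytic fact the argument genuinely rests on.

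To finish, I would assemble the count: the outer \textbf{foreach} loop at lines~\ref{ExpBudgetList:outerstart}--\ref{ExpBudgetList:outerend} executes $m = |S|$ times, and within each pass \greedymaxcover is called once per inner iteration plus once more for the final budget $B_i := k$ after the loop, i.e. $q+1$ times. Therefore $N = m(q+1) = O\!\left(\frac{1}{\epsilon}\, m \log\frac{k}{\hat{c}}\right)$, and multiplying by the per-call cost $\mathrm{gr}(n)$ gives the claimed time bound $O\!\left(\frac{1}{\epsilon}\, m\cdot \mathrm{gr}(n)\cdot \log\frac{k}{\hat{c}}\right)$. Since each call contributes one set to $L$ and nothing else enlarges $L$, the same count bounds the list size, establishing $|L_{\max}| = O\!\left(\frac{1}{\epsilon}\, m \log\frac{k}{\hat{c}}\right)$ and completing the proof.
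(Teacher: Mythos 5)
Your proof is correct and takes essentially the same route as the paper's: count $m$ passes of the outer loop, bound the inner-loop count $q$ via the exit condition $\hat{c}(1+\epsilon)^q < k$, i.e. $q < \log_{1+\epsilon}\left(\frac{k}{\hat{c}}\right)$, and conclude with $\log_{1+\epsilon}\left(\frac{k}{\hat{c}}\right) = O\left(\frac{1}{\epsilon}\log\frac{k}{\hat{c}}\right)$ for $\epsilon<1$. You simply make explicit two details the paper leaves implicit, namely the analytic estimate $\ln(1+\epsilon)\geq \frac{\epsilon}{2}$ behind the change of base, and the observation that each call to \greedymaxcover contributes exactly one set to $L$, so the same count bounds both the running time and $|L_{\max}|$.
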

\begin{proof}
The outer for cycle requires $m$ iterations.
We now bound the number  $q$ of iteration of the inner cycle of the algorithm.
By the exit condition of the cycle, we have:
$\hat{c}\cdot (1+\epsilon)^q < k$,
which is equivalent to:
$q< \log_{1+\epsilon}\left( \frac{k}{\hat{c}} \right)$.
Since for $\epsilon<1$, $\log_{1+\epsilon}\left( \frac{k}{\hat{c}} \right) =O\left( \frac{1}{\epsilon}\log\frac{k}{\hat{c}} \right)$, the statement follows.
\end{proof}
We observe that $O(\log\frac{k}{\hat{c}})$ is polynomially bounded in the size of the input.


\end{section}

\section{Bi-criterion approximation algorithm}\label{sec:bicriteria}
In this section we extend the results given in Section~\ref{sec:general} providing a bi-criterion approximation algorithm that guarantees a $\frac{1}{2}\left(1-\frac{1}{e^{\alpha\beta}}\right)$-approximation to the \probl problem, if we allow an extra budget up to a factor $\beta\geq 1$. We notice that, if $\beta=1$, i.e. we do not increase the budget, the approximation factor is $\frac{1}{2}\left(1-\frac{1}{e^\alpha}\right)$, while if $\beta=\frac{1}{\alpha}\ln \left(\frac{1}{2\epsilon}\right)$ the algorithm achieves an approximation factor of $\frac{1}{2}-\epsilon$, for any arbitrarily small $\epsilon> 0$.

The algorithm is slightly different from Algorithm~\ref{generalalgo} and it is reported in Algorithm~\ref{algo_bicriteria}. In this algorithm, we allow to exceed the given budget $k$ by a factor $\beta$. In particular we modify lines \ref{algo:if} and \ref{algo:beta}, admitting a greater budget respect to Algorithm \ref{generalalgo}.

\begin{algorithm}[t]
    \SetKwInOut{Input}{Input}
    \SetKwInOut{Output}{Output}
    \Input{$S, X$ }
    \Output{$S',X'$}
    $S' := \emptyset$\;
    $X' := \emptyset$\;
    \lForEach{\rm $s\in S$ s.t. $c(s)=0$}{$S' := S'\cup \{s\}$\label{algo:zerocostbin}}
    \Repeat{$c(S' \cup \{s_{\min}(\hat{T})\}, X'\cup \hat{T}) > \beta k$ \label{algo:beta} \rm or $X'= X$}{\label{algo:greedystart}
      \lForEach{\rm $x\in X\setminus X'$ s.t. $c(s',x)=0$ and $s'\in S'$}{$X':=X'\cup \{x\}$\label{algo:zerocost}}
      Build an $\alpha$-list $L$ w.r.t. $(S',X')$\;\label{algo:alpha}
      $\hat{T}:=\arg\max_{T\in L}\frac{f(X' \cup T)-f(X')}{c_{\min}(T)}$\;\label{algo:greedymax}
      \If{$c(S' \cup \{s_{\min}(\hat{T})\}, X'\cup \hat{T}) \leq \beta k$\label{algo:if}  }{    
        $S':=S' \cup \{s_{\min}(\hat{T})\}$\;
        $X':=X' \cup \hat{T}$\;
      }
    }
    \label{algo:greedyend}
    \If{$f(\hat{T})\geq f(X')$}{
      $S':=S'\cup\{s_{\min}(\hat{T})\}$\;
      $X':=\hat{T}$\;\label{algo:secondsol}
    }
    \Return $(S',X')$\;
    \caption{Bi-criterion Algorithm}
    \label{algo_bicriteria}
\end{algorithm}

In the next theorem we show the approximation ratio of this algorithm.

\begin{theorem}\label{th:greedy-bicriteria}
There exists an algorithm that guarantees a $\left[\frac{1}{2}\left(1-\frac{1}{e^{\alpha\beta}}\right),\beta\right]$ bi-criterion approximation for \probl, for any $\beta\geq 1$.
\end{theorem}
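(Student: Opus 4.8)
The plan is to adapt the analysis of Theorem~\ref{th:greedy} almost verbatim, with the budget $k$ systematically replaced by $\beta k$ in the greedy stopping condition, and to track how this substitution propagates through Lemmata~\ref{lemma1algogenerale} and~\ref{lem:induction_min}. First I would re-examine the key inequality in the proof of Lemma~\ref{lemma1algogenerale}. The only place where the numerical value of the budget enters is the bound $\sum_{j=1}^\ell c^*_i(T_j)\leq k$, which holds because the optimal solution $(S^*,X^*)$ has cost at most $k$. Crucially, the optimal solution is still constrained to cost at most $k$ (we do not change the problem, only the budget the algorithm is allowed to spend), so this bound is \emph{unchanged}. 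Hence Lemma~\ref{lemma1algogenerale} and, by the same induction, Lemma~\ref{lem:induction_min} continue to hold exactly as stated: after each iteration $i$, $f(X_i')\geq\left(1-\prod_{j=1}^i\left(1-\alpha\frac{c_j}{k}\right)\right)f(X^*)$, with $k$ the \emph{original} budget.

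Next I would redefine $l+1$ to be the first iteration in which a set from the $\alpha$-list is rejected because it violates the \emph{relaxed} budget $\beta k$ (line~\ref{algo:beta}). The partial solution $(S'_{l+1},X'_{l+1})$ then satisfies $c(S'_{l+1},X'_{l+1})>\beta k$, and since $\sum_{\ell=1}^{l+1}c_\ell = c(S'_{l+1},X'_{l+1})$, we obtain $\sum_{\ell=1}^{l+1}c_\ell>\beta k$. Starting from Lemma~\ref{lem:induction_min} applied at $i=l+1$, I would mimic the chain~\eqref{th:greedy_budget_line1}--\eqref{th:greedy_budget_line3}: replacing $k$ in the denominator by the smaller quantity $c(S'_{l+1},X'_{l+1})/\beta$ (which is larger than $k$, so the bound strengthens appropriately) and using the convexity fact that $\bigl[1-\prod_i(1-\tfrac{a_i\alpha}{A})\bigr]$ is minimized at equal $a_i$ with $A=\sum_\ell c_\ell$, I arrive at
\[
f(X'_{l+1})\;>\;\left[1-\left(1-\frac{\alpha\beta}{l+1}\right)^{l+1}\right]f(X^*)\;\geq\;\left(1-\frac{1}{e^{\alpha\beta}}\right)f(X^*).
\]
The extra factor $\beta$ appears precisely because $\sum_\ell c_\ell>\beta k$ forces the exponent to scale by $\beta$, which is the whole point of the extra budget.

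Finally I would close exactly as in Theorem~\ref{th:greedy}: by submodularity $f(X'_{l+1})\leq f(X'_l)+f(\hat{T})$, so $\max\{f(X'_l),f(\hat{T})\}\geq\frac12\left(1-\frac{1}{e^{\alpha\beta}}\right)f(X^*)$, and both candidates are feasible within budget $\beta k$ (the greedy solution by construction, and the second candidate $(S'_l\cup\{s_{\min}(\hat{T})\},\hat{T})$ because $\hat{T}$ lies in an $\alpha$-list, hence $c(S'\cup\{s_{\min}(\hat{T})\},\hat{T})\leq k\leq\beta k$). This gives the claimed $\left[\frac12\left(1-\frac{1}{e^{\alpha\beta}}\right),\beta\right]$ bi-criterion guarantee.

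I do not expect a genuine obstacle here, since the argument is a controlled rescaling of an already-proven theorem. The one point demanding care is conceptual rather than computational: one must keep the \emph{two} roles of the budget separate, namely that the optimum $(S^*,X^*)$ is still bounded by $k$ (this feeds into Lemma~\ref{lemma1algogenerale} and must \emph{not} be rescaled), whereas the algorithm's stopping threshold is $\beta k$ (this feeds the final exponent and \emph{is} rescaled). Conflating these would either break the summation bound $\sum_j c^*_i(T_j)\leq k$ or lose the factor $\beta$ in the exponent, so I would state Lemma~\ref{lem:induction_min} unchanged and introduce $\beta$ only at the step where $\sum_\ell c_\ell>\beta k$ is used.
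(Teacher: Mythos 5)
Your proposal is correct and follows essentially the same route as the paper: Lemma~\ref{lem:induction_min} is kept verbatim (the optimum is still constrained by $k$, so the bound $\sum_j c^*_i(T_j)\leq k$ is untouched), the violated threshold $c(S'_{l+1},X'_{l+1})>\beta k$ is then used to replace $\frac{c_\ell\alpha}{k}$ by $\frac{c_\ell\alpha\beta}{c(S'_{l+1},X'_{l+1})}$ in the product, yielding the exponent $\alpha\beta$, and the final submodularity-plus-maximum step gives the factor $\frac{1}{2}$, exactly as in the paper's chain \eqref{th:greedy-bicriteria-appendix_line2}. The only blemish is the self-contradictory phrase ``the smaller quantity $c(S'_{l+1},X'_{l+1})/\beta$ (which is larger than $k$)''; the displayed inequalities make clear you apply the substitution in the correct direction, so this is a slip of wording rather than of mathematics.
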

\begin{proof}
We observe that since $(S_{l+1}',X_{l+1}')$ violates the budget, then $c(S_{l+1}',X_{l+1}')>\beta k$. Moreover, for a sequence of numbers $a_1,a_2,\ldots,a_n$ such that $\sum_{\ell=1}^n a_\ell = A$, the function $\left[ 1 - \prod_{i=1}^n \left(1 -\frac{a_i\cdot \alpha}{A}\right)\right]$ achieves its minimum when $a_i=\frac{A}{n}$ and that 
$
\left[ 1 - \prod_{i=1}^n \left(1 -\frac{a_i\cdot \alpha}{A}\right)\right]\geq 1-\left(1-\frac{\alpha}{n}\right)^n\geq 1-e^{-\alpha}.
$
Therefore, by applying Lemma~\ref{lem:induction_min} for $i=l+1$ and observing that $\sum_{\ell=1}^{l+1}c_\ell=c(S_{l+1}',X_{l+1}')$, we obtain:
\begin{align}
 f(X_{l+1}') & \geq \left[ 1- \prod_{\ell=1}^{l+1}\left( 1 - \frac{c_\ell \cdot \alpha}{k}\right) \right]f(X^*)\\ \label{th:greedy-bicriteria-appendix_line2}
                         & > \left[ 1- \prod_{\ell=1}^{l+1}\left( 1 - \frac{c_\ell \cdot \alpha \cdot \beta}{c(S'_{l+1},X'_{l+1})}\right) \right]f(X^*)\\ 
                         & \geq \left[ 1- \left( 1 - \frac{\alpha  \cdot \beta}{(l+1)}\right)^{l+1} \right]f(X^*)
                          \geq \left(1-\frac{1}{e^{\alpha\beta}}\right) f(X^*).
\end{align}
Since, by submodularity, $f(X_{l+1}')  \leq f(X_{l}') + f(T)$, where $T$ is the set selected at iteration $l+1$, we get
\[
f(X_{l}') + f(T)\geq \left(1-\frac{1}{e^{\alpha\beta}}\right) f(X^*).
\]
Hence, $\max\{ f(X_l), f(T) \}\geq \frac{1}{2}\left(1-\frac{1}{e^{\alpha\beta}}\right) f(X^*)$. The theorem follows by observing that $T$ is the set selected as the second candidate solution at lines~\ref{generalalgo:max}--\ref{generalalgo:secondsol} of Algorithm~\ref{algo_bicriteria}.
\end{proof}



\section{Conclusion}
In this paper we defined a new challenging problem which leads to many open problems and new research questions, we referred to it as the generalized budgeted submodular set function maximization problem.

The main open problem is to close the gap between the known hardness result of $1-\frac{1}{e^\alpha}$, where $\alpha=1$ for the MC problem~\cite{F98} and $\alpha=1-\frac{1}{e}$ for the non-stochastic adaptive seeding problem with knapsack constraint problem~\cite{RSS15}, and our approximation bound of $\frac{1}{2}\left(1-\frac{1}{e^\alpha}\right)$.
One possibility to get rid of the $\frac{1}{2}$ factor could be to use the partial enumeration technique exploited in specific subproblems (e.g.  budgeted maximum coverage problem~\cite{khuller1999budgeted} and monotone submodular set function subject to a knapsack constraint maximization problem~\cite{S04}). However, this requires that each greedy step selects a single element of $X$, to be added to a partial solution $X'$, while our greedy algorithm selects a subset of $X\setminus X'$ that maximizes the ratio between its marginal increment in the objective function and its marginal cost. Note that this set can contain more than one element in order to ensure that the ratio is non-increasing at each iteration of the greedy algorithm, which is needed to apply the analysis in~\cite{khuller1999budgeted} and~\cite{S04}.


Other research directions, that deserve further investigation, include the study of the \probl considering different cost functions and also different objective functions where the profit given by an element $x$ depends on the bin $s$ which it is associated with.
It would be interesting also to analyse \probl in the case that each bin $s\in S$ has its own budget $k$ to use in order to maximize the objective function.   
\bibliographystyle{plainurl}
\bibliography{Bibliography} 

\begin{thebibliography}{10}

\bibitem{BPRSS16}
Ashwinkumar Badanidiyuru, Christos~H. Papadimitriou, Aviad Rubinstein, Lior
  Seeman, and Yaron Singer.
\newblock Locally adaptive optimization: Adaptive seeding for monotone
  submodular functions.
\newblock In {\em 27th {ACM-SIAM} Symp. on Disc. Alg., (SODA)}, pages 414--429,
  2016.

\bibitem{BFNS12}
Niv Buchbinder, Moran Feldman, Joseph Naor, and Roy Schwartz.
\newblock A tight linear time (1/2)-approximation for unconstrained submodular
  maximization.
\newblock In {\em 53rd {IEEE} Symp. on Foundations of Computer Science,
  {FOCS}}, pages 649--658, 2012.

\bibitem{CCPV11}
Gruia C{\u{a}}linescu, Chandra Chekuri, Martin P{\'{a}}l, and Jan
  Vondr{\'{a}}k.
\newblock Maximizing a monotone submodular function subject to a matroid
  constraint.
\newblock {\em {SIAM} J. Comput.}, 40(6):1740--1766, 2011.

\bibitem{C09}
Ioannis Caragiannis.
\newblock Wavelength management in {WDM} rings to maximize the number of
  connections.
\newblock {\em {SIAM} J. Discrete Math.}, 23(2):959--978, 2009.

\bibitem{CM13}
Ioannis Caragiannis and Gianpiero Monaco.
\newblock A 6/5-approximation algorithm for the maximum 3-cover problem.
\newblock {\em J. Comb. Optim.}, 25(1):60--77, 2013.

\bibitem{CK04}
Chandra Chekuri and Amit Kumar.
\newblock Maximum coverage problem with group budget constraints and
  applications.
\newblock In {\em 7th Intl. Work. on Approximation Algorithms for Combinatorial
  Optimization Problems, {APPROX}}, pages 72--83, 2004.

\bibitem{CK08}
Reuven Cohen and Liran Katzir.
\newblock The generalized maximum coverage problem.
\newblock {\em Information Processing Letters}, 108(1):15--22, 2008.

\bibitem{DSV16}
Gianlorenzo D'Angelo, Lorenzo Severini, and Yllka Velaj.
\newblock Influence maximization in the independent cascade model.
\newblock In {\em Proceedings of the 17th Italian Conference on Theoretical
  Computer Science (ICTCS2016)}, volume 1720, pages 269--274. CEUR-WS.org,
  2016.

\bibitem{DSV17}
Gianlorenzo D'Angelo, Lorenzo Severini, and Yllka Velaj.
\newblock Selecting nodes and buying links to maximize the information
  diffusion in a network.
\newblock In {\em 42st Intl. Symp. on Mathematical Foundations of Computer
  Science, {MFCS}}, volume~83 of {\em LIPIcs}, pages 75:1--75:14, 2017.

\bibitem{TCS18}
Gianlorenzo D'Angelo, Lorenzo Severini, and Yllka Velaj.
\newblock Recommending links through influence maximization.
\newblock {\em Theoretical Computer Science}, 2018.
\newblock \href {http://dx.doi.org/https://doi.org/10.1016/j.tcs.2018.01.017}
  {\path{doi:https://doi.org/10.1016/j.tcs.2018.01.017}}.

\bibitem{F98}
Uriel Feige.
\newblock A threshold of ln \emph{n} for approximating set cover.
\newblock {\em Journal of ACM}, 45(4):634--652, 1998.

\bibitem{FMV07}
Uriel Feige, Vahab~S. Mirrokni, and Jan Vondr{\'{a}}k.
\newblock Maximizing non-monotone submodular functions.
\newblock In {\em 48th {IEEE} Symp. on Foundations of Computer Science (FOCS)},
  pages 461--471, 2007.

\bibitem{FW14}
Yuval Filmus and Justin Ward.
\newblock Monotone submodular maximization over a matroid via non-oblivious
  local search.
\newblock {\em {SIAM} J. Comput.}, 43(2):514--542, 2014.

\bibitem{H97}
D.S. Hochbaum.
\newblock {\em Approximation Algorithms for NPHard Problems}.
\newblock PWS Publishing Company,, Boston, MA, USA, 1997.

\bibitem{IB13}
Rishabh~K. Iyer and Jeff~A. Bilmes.
\newblock Submodular optimization with submodular cover and submodular knapsack
  constraints.
\newblock In {\em 27th Annual Conference on Neural Information Processing
  Systems (NIPS)}, pages 2436--2444, 2013.

\bibitem{K91}
V~Kann.
\newblock Maximum bounded 3-dimensional matching is max snp-comple.
\newblock {\em Information Processing Letters}, 37:27--35, 1991.

\bibitem{KKT15}
David Kempe, Jon~M. Kleinberg, and {\'{E}}va Tardos.
\newblock Maximizing the spread of influence through a social network.
\newblock {\em Theory of Computing}, 11:105--147, 2015.

\bibitem{khuller1999budgeted}
Samir Khuller, Anna Moss, and Joseph~Seffi Naor.
\newblock The budgeted maximum coverage problem.
\newblock {\em Information Processing Letters}, 70(1):39--45, 1999.

\bibitem{LSV10}
Jon Lee, Maxim Sviridenko, and Jan Vondr{\'{a}}k.
\newblock Submodular maximization over multiple matroids via generalized
  exchange properties.
\newblock {\em Math. Oper. Res.}, 35(4):795--806, 2010.

\bibitem{NWF78}
G.L. Nemhauser, L.A. Wolsey, and M.L. Fisher.
\newblock An analysis of approximations for maximizing submodular set
  {functions--I}.
\newblock {\em Mathematical Programming}, 14(1):265--294, 1978.

\bibitem{RSS15}
Aviad Rubinstein, Lior Seeman, and Yaron Singer.
\newblock Approximability of adaptive seeding under knapsack constraints.
\newblock In {\em 16th ACM Conf. on Economics and Computation}, pages 797--814.
  ACM, 2015.

\bibitem{SS13}
Lior Seeman and Yaron Singer.
\newblock Adaptive seeding in social networks.
\newblock In {\em IEEE 54th Symp. on Foundations of Computer Science (FOCS)},
  pages 459--468. IEEE, 2013.

\bibitem{S04}
Maxim Sviridenko.
\newblock A note on maximizing a submodular set function subject to a knapsack
  constraint.
\newblock {\em Operation Research Letters}, 32(1):41--43, 2004.

\bibitem{VKS16}
Irving van Heuven~van Staereling, Bart de~Keijzer, and Guido Sch{\"a}fer.
\newblock {The Ground-Set-Cost Budgeted Maximum Coverage Problem}.
\newblock In {\em 41st International Symposium on Mathematical Foundations of
  Computer Science (MFCS 2016)}, volume~58 of {\em LIPIcs}, pages 50:1--50:13,
  2016.

\bibitem{VCZ11}
Jan Vondr{\'{a}}k, Chandra Chekuri, and Rico Zenklusen.
\newblock Submodular function maximization via the multilinear relaxation and
  contention resolution schemes.
\newblock In {\em 43rd {ACM} Symp. on Theory of Computing, {STOC}}, pages
  783--792, 2011.

\bibitem{W12}
Justin Ward.
\newblock A (k+3)/2-approximation algorithm for monotone submodular k-set
  packing and general k-exchange systems.
\newblock In {\em 29th Intl. Symp.on Theoretical Aspects of Computer Science,
  {STACS}}, pages 42--53, 2012.

\end{thebibliography}

\end{document}